\newif\ifcomments
\newcommand{\mktag}[1]{\mathsf{\color{teal}#1}}
\newcommand{\mkkeyword}[1]{\mathsf{\color{blue}#1}}
\newcommand{\mkfunction}[1]{\mathsf{#1}}
\newcommand{\parens}[1]{(#1)}
\newcommand{\braces}[1]{\{#1\}}
\newcommand{\bracks}[1]{[#1]}
\newcommand{\angles}[1]{\langle#1\rangle}
\newcommand{\set}[1]{\braces{#1}}
\newcommand{\seqof}[1]{\overline{#1}}
\newcommand{\rulename}[1]{\textnormal{\textsc{\small[#1]}}}
\newcommand{\defrule}[2][]{\hypertarget{rule:\ifblank{#1}{#2}{#1}}{\rulename{#2}}}
\newcommand{\refrule}[2][]{\rulename{#2}\xspace}
\newcommand{\Nat}{\mathbb{N}}
\newcommand{\CSLL}{\textsf{CSLL}\xspace}
\newcommand{\CoreCSLL}{\texorpdfstring{$\textsf{\upshape CSLL}^\infty$}{coreCSLL}\xspace}
\newcommand{\isdet}{\textsf{\upshape det}}
\newcommand{\DetCSLL}{$\textsf{\upshape CSLL}^\infty_{\isdet}$\xspace}
\newcommand{\muMALL}{\texorpdfstring{$\mu\textsf{\upshape MALL}^\infty$}{muMALL}\xspace}
\newcommand{\SILLS}{$\textsf{SILL}_{\textsf{S}}$\xspace}
\newcommand{\HCPND}{$\textsf{HCP}_{\textsf{ND}}$\xspace}
\newcommand\eoe{\lipicsEnd}
\newcommand\A{\mathsf{A}}
\newcommand\B{\mathsf{B}}
\newcommand{\VarSet}{\mathcal{V}}
\newcommand{\ProcessSet}{\mathcal{P}}
\newcommand{\InTag}{\mktag{in}}
\newcommand{\LeftTag}{\InTag_1}
\newcommand{\RightTag}{\InTag_2}
\newcommand{\x}{x}
\newcommand{\y}{y}
\newcommand{\z}{z}
\renewcommand{\u}{u}
\renewcommand{\v}{v}
\newcommand{\parop}{\mathbin|}
\newcommand{\Let}[3]{#1\ifblank{#2}{}{\parens{#2}}\triangleq#3}
\newcommand{\Call}[2]{#1\ifblank{#2}{}{\angles{#2}}}
\newcommand{\Link}[2]{#1\leftrightarrow#2}
\newcommand{\Close}[1]{\mkkeyword{close}\,#1}
\newcommand{\Wait}[1]{\mkkeyword{wait}\,#1}
\newcommand{\Fail}[1]{\mkkeyword{fail}\,#1}
\newcommand{\Select}[2]{#2\,#1}
\newcommand{\Left}[1]{\Select{#1}{\InTag_1}}
\newcommand{\Right}[1]{\Select{#1}{\InTag_2}}
\newcommand{\CaseX}[3]{\mkkeyword{case}\,{#1}\braces{#2,#3}}
\newcommand{\Case}[3]{\CaseX{#1}{#2}{#3}}
\newcommand{\Fork}[4]{#1\bracks{#2}\parens{#3\parop#4}}
\newcommand{\Send}[2]{#1\angles{#2}}
\renewcommand{\Join}[2]{#1\parens{#2}}
\newcommand{\Cut}[3]{\parens{#1}\parens{#2\parop#3}}
\newcommand{\EmptyPool}[1]{\tclient#1\bracks{}}
\newcommand{\Cons}[3]{\Client{#1}{#2}.#3 \mathrel{::}}
\newcommand{\Nil}[1]{\tclient#1\bracks{}}
\newcommand{\Client}[2]{\tclient#1\bracks{#2}}
\newcommand{\Server}[4]{\tserver#1\parens{#2}\braces{#3,#4}}
\newcommand{\Hole}{\bracks\,}
\newcommand{\RC}{\mathcal{C}}
\newcommand{\RD}{\mathcal{D}}
\newcommand{\AddressSet}{\mathcal{A}}
\newcommand{\address}{\addressA}
\newcommand{\addressA}{\alpha}
\newcommand{\addressB}{\beta}
\newcommand{\Type}{\TypeT}
\newcommand{\TypeT}{T}
\newcommand{\TypeS}{S}
\newcommand{\Formulas}{\Phi}
\newcommand{\Formula}{\FormulaF}
\newcommand{\FormulaF}{\varphi}
\newcommand{\FormulaG}{\psi}
\newcommand{\X}{X}
\newcommand{\Y}{Y}
\newcommand{\mkformula}[1]{#1}
\newcommand{\Bot}{\mkformula\bot}
\newcommand{\Top}{\mkformula\top}
\newcommand{\One}{\mkformula{\mathbf{1}}}
\newcommand{\Zero}{\mkformula{\mathbf{0}}}
\newcommand{\choice}{\mathbin{\mkformula\oplus}}
\newcommand{\branch}{\mathbin{\mkformula\binampersand}}
\newcommand{\tfork}{\mathbin{\mkformula\otimes}}
\newcommand{\tjoin}{\mathbin{\mkformula\bindnasrepma}}
\newcommand{\tserver}{\text{\upshape!`}}
\newcommand{\tclient}{\text{\upshape?`}}
\newcommand{\tmu}{\mkformula\mu}
\newcommand{\tnu}{\mkformula\nu}
\newcommand{\Sequent}{\SequentS}
\newcommand{\SequentS}{\Sigma}
\newcommand{\SequentT}{\Theta}
\newcommand{\Context}{\ContextC}
\newcommand{\ContextC}{\Upgamma}
\newcommand{\ContextD}{\Updelta}
\newcommand{\wtp}[3][]{#2 \vdash\ifblank{#1}{}{{\color{red}XXX}} #3}
\newcommand{\CallRule}{call}
\newcommand{\CutRule}{cut}
\newcommand{\FailRule}{$\Top$}
\newcommand{\CloseRule}{$\One$}
\newcommand{\WaitRule}{$\Bot$}
\newcommand{\ForkRule}{$\tfork$}
\newcommand{\JoinRule}{$\tjoin$}
\newcommand{\SelectRule}{$\choice$}
\newcommand{\CaseRule}{$\branch$}
\newcommand{\RecRule}{$\tmu$}
\newcommand{\CorecRule}{$\tnu$}
\newcommand{\ServerRule}{server}
\newcommand{\ConsRule}{client}
\newcommand{\NilRule}{done}
\newcommand{\pcong}{\preccurlyeq}
\newcommand{\red}{\rightarrow}
\newcommand{\reds}{\Rightarrow}
\newcommand{\dred}{\red_{\isdet}}
\newcommand{\dreds}{\reds_{\isdet}}
\newcommand{\wred}{\Rightarrow}
\newcommand{\nred}{\arrownot\red}
\newcommand{\ndred}{\arrownot\dred}
\newcommand{\eqdef}{\stackrel{\smash{\textsf{\upshape\tiny def}}}=}
\newcommand{\prefix}{\sqsubseteq}
\newcommand{\subf}{\preceq}
\newcommand{\tred}{\leadsto}
\newcommand{\fn}[1]{\mkfunction{fn}\parens{#1}}
\newcommand{\bn}[1]{\mkfunction{bn}\parens{#1}}
\newcommand{\subst}[2]{\braces{#1/#2}}
\newcommand{\dual}[1]{#1^{\bot}}
\newcommand{\dom}[1]{\mkfunction{dom}\parens{#1}}
\newcommand{\InfOften}[1]{\mkfunction{inf}\parens{#1}}
\newcommand{\minf}{\mkfunction{min}\,}
\newcommand{\strip}[1]{\overline{#1}}
\newcommand{\threads}[1]{\mathsf{guards}\parens{#1}}
\newcommand{\channels}[1]{\mathsf{channels}\parens{#1}}
\theoremstyle{remark}
\title{On the Fair Termination of Client-Server Sessions}
\author{Luca Padovani}{Universit\`a di Camerino, Italy}{luca.padovani@unicam.it}{https://orcid.org/0000-0001-9097-1297}{}
\authorrunning{L. Padovani}
\keywords{client-server sessions, linear logic, fixed points, fair termination, cut elimination}
\begin{document}

\maketitle

\begin{abstract}
    Client-server sessions are based on a variation of the traditional
    interpretation of linear logic propositions as session types in which
    non-linear channels (those regulating the interaction between a pool of
    clients and a single server) are typed by \emph{coexponentials} instead of
    the usual exponentials.
    Coexponentials enable the modeling of racing interactions, whereby clients
    compete to interact with a single server whose internal state (and thus the
    offered service) may change as the server processes requests sequentially. 
    In this work we present a fair termination result for \CoreCSLL, a core
    calculus of client-server sessions. We design a type system such that every
    well-typed term corresponds to a valid derivation in \muMALL, the infinitary
    proof theory of linear logic with least and greatest fixed points. We then
    establish a correspondence between reductions in the calculus and principal
    reductions in \muMALL. Fair termination in \CoreCSLL follows from cut
    elimination in \muMALL.
\end{abstract}

\section{Introduction}
\label{sec:introduction}

Session types~\cite{Honda93,HondaVasconcelosKubo98,HuttelEtAl16} are
descriptions of communication protocols enabling the static enforcement of a
variety of safety and liveness properties, including the fact that communication
channels are used according to their protocol (\emph{fidelity}), that processes
do not get stuck (\emph{deadlock freedom}), that pending communications are
eventually completed (\emph{livelock freedom}), that sessions eventually end
(\emph{termination}).
It is possible to trace a close correspondence between session types and
propositions of linear logic, and between the typing rules of a session type
system and the proof rules of linear
logic~\cite{Wadler14,CairesPfenningToninho16,LindleyMorris16}. This
correspondence provides session type theories with a solid logical foundation
and enables the application of known results concerning linear logic proofs into
the domain of communicating protocols. One notable example is \emph{cut
elimination}: the fact that every linear logic proof can be reduced to a form
that does not make use of the cut rule means that the process described by the
proof can be reduced to a form in which no pending communication is present,
provided that there is a good correspondence between cut reductions in proofs
and reductions in processes.

The development of session type systems based on linear logic also poses some
challenges with respect to their ability to cope with ``real-world'' scenarios.
An example, which is the focus of this work, is the modeling of the interactions
between a \emph{pool of clients} and a \emph{single server}. By definition, a
server is a process that can handle an unbounded number of requests made by
clients. In a session type system based on linear logic, it is natural to
associate the channel from which a server accepts client requests with a type of
the form $!\Type$, indicating the unlimited availability of a service with type
$\Type$. In fact, it was observed early on~\cite{GirardLafont87} that the
meaning of the ``of course'' modality $!\Type$ could be informally expressed by
the equation
\[
    !\Type \cong \One \branch \Type \branch (!\Type \tfork !\Type)
\]
which could be read ``as many copies of $\Type$ as the clients require''.
While appealing from a theoretical point of view, the association between the
concept of server and the ``of course'' modality is both unrealistic and
imprecise.
First of all, it models the ``unlimited'' availability of the server by means of
unlimited \emph{parallel} copies of the server, each copy dealing with a single
request, rather than by a \emph{single} process that is capable of handling an
unlimited number of requests \emph{sequentially}.
Second, it fails to capture the fact that each connection between a client and
the server may alter the server's internal state, in such a way that different
connections may potentially affect each other.

These considerations have led Qian et al.~\cite{QianKavvosBirkedal21} to develop
\CSLL (for ``Client-Server Linear Logic''), a session type system based on
linear logic which includes the \emph{coexponential} modalities $\tclient\Type$
and $\tserver\Type$ whose meaning can be (informally) expressed by the equations
\begin{equation}
    \label{eq:coexp}
    \tclient\Type \cong \One \choice \Type \choice (\tclient\Type \tfork \tclient\Type)
    \text{\qquad and\qquad}
    \tserver\Type \cong \Bot \branch \Type \branch (\tserver\Type \tjoin \tserver\Type)
\end{equation}
according to which a server that behaves as $\tserver\Type$ offers $\Type$ as
many times as necessary to satisfy all client requests, but it does so
sequentially and in some (unspecified) order.
%
%
Qian et al.~\cite{QianKavvosBirkedal21} show that well-typed \CSLL processes are
deadlock free, but they leave a proof of termination to future work conjecturing
that it could be quite involved. A proof of this property is valuable since
termination (combined with deadlock freedom) implies livelock freedom.

In this paper we attack the problem of establishing a termination result for
\CSLL. Instead of providing an \emph{ad hoc} proof, we attempt to reduce the
termination problem for \CSLL to the cut elimination property of a known logical
system. To this aim, we propose a variation of \CSLL called \CoreCSLL that is in
close relationship with
\muMALL~\cite{BaeldeDoumaneSaurin16,Doumane17,BaeldeEtAl22}, the infinitary
proof theory of multiplicative-additive linear logic with least and greatest
fixed points. The basic idea is to encode the coexponentials in \CoreCSLL's type
system as fixed points in \muMALL following their expected meaning
(\cref{eq:coexp}). At this point, the cut elimination property of \muMALL should
allow us to deduce that well-typed \CoreCSLL processes do not admit infinite
reduction sequences.
As it turns out, we are unable to follow this plan of action in full. The
problem is that some reductions in \CoreCSLL do not correspond to cut reduction
steps in \muMALL. More specifically, even though clients are queued into client
pools, they should be able to reduce in any order, independently of their
position in the queue. This independent reduction of the clients in the same
pool is not matched by the sequence of cut reduction steps that are performed in
the cut elimination proof of \muMALL.
Still, the cut elimination property of \muMALL allows us to prove a useful
result, namely that every well-typed \CoreCSLL process is \emph{fairly
terminating}. Fair termination~\cite{GrumbergFrancezKatz84,Francez86} is weaker
than termination since it does not rule out the existence of infinite reduction
sequences. However, it guarantees that every \emph{fair} and maximal reduction
sequence of a well-typed \CoreCSLL process is finite, under a suitable fairness
assumption. In particular, fair termination is strong enough (when combined with
deadlock freedom) to guarantee livelock freedom.

The adoption of \muMALL as logical foundation for \CoreCSLL has another
advantage. In the original presentation of \CSLL~\cite{QianKavvosBirkedal21} the
process calculus is equipped with an unconventional operational semantics
whereby reductions can occur underneath prefixes and prefixes may be moved
around crossing restrictions, parallel compositions and other (unrelated)
prefixes. This semantics is justified to keep the process reduction rules and
the cut reduction rules sufficiently aligned, so that the cut elimination
property in the logic can be reflected to some valuable property in the
calculus, such as deadlock freedom. In contrast, \CoreCSLL features an entirely
conventional reduction semantics. We can afford to do so because \muMALL is an
\emph{infinitary} proof system in which the cut elimination property is proved
bottom-up by \emph{reducing outermost cuts first}. This reduction strategy
matches the ordinary reduction semantics of any process calculus in which
reductions happen at the outermost levels of processes. In the end, since the
reduction semantics of \CoreCSLL is stricter than that of \CSLL, the deadlock
freedom and the fair termination results we prove for \CoreCSLL are somewhat
stronger than their counterparts in the context of \CSLL.

\subparagraph*{Structure of the paper.}
\cref{sec:language} describes syntax and semantics of \CoreCSLL and defines the
notion of fairly terminating process.
We develop the type system for \CoreCSLL in \cref{sec:types}.
In \cref{sec:mumall} we recall the key elements of \muMALL, before addressing
the proof that well-typed \CoreCSLL processes fairly terminate in
\cref{sec:termination}.
\cref{sec:example} revisits an example of non-deterministic server given by Qian
et al.~\cite{QianKavvosBirkedal21} in our setting.
We summarize our results and further compare \CoreCSLL with
\CSLL~\cite{QianKavvosBirkedal21} and other related work in
\cref{sec:conclusion}.
Some proofs and definitions have been moved into \cref{sec:extra-types}.

\newcommand\Buyer{\textit{Buyer}}
\newcommand\Seller{\textit{Seller}}

\section{Syntax and Semantics of \CoreCSLL}
\label{sec:language}

\begin{table}
    \caption{\label{tab:syntax} Syntax of \CoreCSLL.}
    \centering
    \begin{math}
        \displaystyle
        \begin{array}[t]{@{}r@{~}c@{~}ll@{}}
            P, Q
            & ::= & \Call\A{\seqof\x} & \text{invocation}
            \\
            & | & \Fail\x & \text{failure}
            \\
            & | & \Wait\x.P & \text{wait}
            \\
            & | & \Join\x\y.P & \text{input}
            \\
            & | & \Case\x{P}{Q} & \text{branch}
            \\
            & | & \Server\x\y{P}{Q} & \text{server}
        \end{array}
        ~
        \begin{array}[t]{@{}r@{~}c@{~}lll@{}}
            & | & \Cut\x{P}{Q} & \text{parallel composition}
            \\
            & | & \Nil\x & \text{empty pool}
            \\
            & | & \Close\x & \text{close}
            \\
            & | & \Fork\x\y{P}{Q} & \text{output}
            \\
            & | & \Select\x{\InTag_i}.P & \text{select} & i\in\set{1,2}
            \\
            & | & \Cons\x\y{P}{Q} & \text{client pool}
        \end{array}
    \end{math}
\end{table}

In this section we define syntax and semantics of \CoreCSLL, a
calculus of sessions in which servers handle client requests
sequentially.
The syntax of \CoreCSLL makes use of an infinite set $\VarSet$ of
\emph{channels} ranged over by $x$, $y$ and $z$ and a set $\ProcessSet$ of
\emph{process names} ranged over by $\A$, $\B$, and so on. In \CoreCSLL channels
are of two kinds (which will be distinguished by their type): \emph{session
channels} connect two communicating processes; \emph{shared channels} connect an
unbounded number of clients with a single server. The structure of terms is
given by the grammar in \cref{tab:syntax} and their meaning is informally
described below.
%
%
The term $\Cut\x{P}{Q}$ represents the parallel composition of $P$ and $Q$
connected by the restricted channel $x$, which can be either a session channel
or a shared channel.
The term $\Fail\x$ represents a process that signals a failure on
channel $x$.
The term $\Close\x$ models the closing of a session, whereas
$\Wait\x.P$ models a process that waits for $x$ to be closed and
then continues as $P$.
The term $\Fork\x\y{P}{Q}$ models a process that creates a new
channel $y$, sends $y$ over $x$, uses $y$ as specified by $P$ and
$x$ as specified by $Q$.
%
%
The term $\Join\x\y.P$ models a process that receives a channel $y$
from $x$ and then behaves as $P$.
The term $\Select\x{\InTag_i}.P$ models a process that sends the label
$\InTag_i$ over $x$ and then behaves as $P$. In this work we only consider two
labels $\InTag_1$ and $\InTag_2$, although it is common to allow for an
arbitrary set of atomic labels. Dually, the term $\Case\x{P_1}{P_2}$ models a
process that waits for a label $\InTag_i$ from $x$ and then behaves according to
$P_i$.
The term $\Nil\x$ models the empty pool of clients connecting with a server on
the shared channel $x$, whereas the term $\Cons\x\y{P}Q$ models a client pool
consisting of a client that connects with a server on channel $x$ and behaves as
$P$ and another client pool $Q$. Occasionally we write $\Client\x\y.P$ instead
of $\Cons\x\y{P}\Nil\x$.
The term $\Server\x\y{P}{Q}$ models a server that waits for
connections on the shared channel $x$. If a new connection $y$ is
established, the server continues as $P$. If no clients are left
connecting on $x$, the service on $x$ is terminated and the process
continues as $Q$.
Finally, a term $\Call\A{\seqof\x}$ represents the invocation of the process
named $\A$ with arguments $\seqof\x$. We assume that each process name is
associated with a unique global definition of the form $\Let\A{\seqof\x}{P}$.
The notation $\seqof{e}$ is used throughout the paper to represent possibly
empty sequences $e_1,\dots,e_n$ of various entities.

The notions of free and bound names are defined in the expected way. Note that
the output operations $\Fork\x\y{P}{Q}$ and $\Cons\x\y{P}Q$ bind $y$ in $P$ but
not in $Q$. We write $\fn{P}$ and $\bn{P}$ for the sets of free and bound names
in $P$, we identify processes up to renaming of bound channel names and we
require $\fn{P} = \set{\seqof\x}$ for each global definition
$\Let\A{\seqof\x}{P}$.

\begin{table}
  \caption{\label{tab:semantics}Structrual pre-congruence and reduction semantics of \CoreCSLL.}
  \[
    \begin{array}{@{}rll@{}}
      \defrule{s-par-comm} &
      \Cut\x{P}{Q} \pcong \Cut\x{Q}{P}
      \\
      \defrule{s-pool-comm} &
      \multicolumn{2}{l@{}}{
        \Cons\x\y{P}\Cons\u\v{Q}R \pcong \Cons\u\v{Q}\Cons\x\y{P}R
      }
      \\
      \defrule{s-par-assoc} &
      \Cut\x{P}{\Cut\y{Q}{R}} \pcong \Cut\y{\Cut\x{P}{Q}}{R} &
      x\in\fn{Q}\setminus\fn{R}, y\not\in\fn{P}
      \\
      \defrule{s-pool-par} &
      \Cons\x\y{P}\Cut\z{Q}{R} \pcong \Cut\z{\Cons\x\y{P}Q}{R} &
      x \in \fn{Q}, z \not\in \fn{\Client\x\y.P}
      \\
      \defrule{s-par-pool} &
      \Cut\z{\Cons\x\y{P}Q}{R} \pcong \Cons\x\y{P}\Cut\z{Q}{R} &
      z\not\in\fn{\Client\x\y.P}
      \\
      \defrule{s-call} &
      \Call\A{\seqof\x} \pcong P &
      \Let\A{\seqof\x}{P}
      \\
      \\
      \defrule{r-close} &
      \Cut\x{\Close\x}{\Wait\x.P} \red P
      \\
      \defrule{r-comm} &
      \multicolumn{2}{l@{}}{
        \Cut\x{\Fork\x\y{P}{Q}}{\Join\x\y.R} \red \Cut\y{P}{\Cut\x{Q}{R}}
      }
      \\
      \defrule{r-case} &
      \Cut\x{\Select\x{\InTag_i}.P}{\Case\x{Q_1}{Q_2}} \red \Cut\x{P}{Q_i}
      \\
      \defrule{r-done} &
      \Cut\x{\EmptyPool\x}{\Server\x\y{P}{Q}} \red Q
      \\
      \defrule{r-connect} &
      \multicolumn{2}{l@{}}{
        \Cut\x{\Cons\x\y{P}Q}{\Server\x\y{R_1}{R_2}} \red \Cut\y{P}{\Cut\x{Q}{R_1}}
      }
      \\
      \defrule{r-par} &
      \Cut\x{P}{R} \red \Cut\x{Q}{R} &
      P \red Q
      \\
      \defrule{r-pool} &
      \Cons\x\y{R}P \red \Cons\x\y{R}Q &
      P \red Q
      \\
      \defrule{r-struct} &
      P \red Q &
      P \pcong P' \red Q' \pcong Q
    \end{array}
  \]
\end{table}

The operational semantics of \CoreCSLL is given by a structural precongruence
relation $\pcong$ and a reduction relation $\red$, both defined in
\cref{tab:semantics} and described below.
Rules \refrule{s-par-comm} and \refrule{s-pool-comm} state the expected
commutativity of parallel and pool compositions. In particular,
\refrule{s-pool-comm} allows clients in the same queue to swap positions,
modeling the fact that the order in which they connect to the server is not
deterministic.
Rule \refrule{s-par-assoc} models the associativity of parallel composition. The
side conditions make sure that no channel is captured ($y\not\in\fn{P}$) or left
dangling ($x\not\in\fn{R}$) and that parallel processes remain connected
($x\in\fn{Q}$).
The rules \refrule{s-pool-par} and \refrule{s-par-pool} deal with the mixed
associativity between parallel and pool compositions. The side conditions ensure
that no bound name leaves its scope and that parallel processes remain
connected.
Finally, \refrule{s-call} unfolds a process invocation to its definition.

Concerning the reduction relation, rule \refrule{r-close} models the closing of
a session, rule \refrule{r-comm} models the exchange of a channel and
\refrule{r-case} that of a label.
Rule \refrule{r-connect} models the connection of a client with a server,
whereas \refrule{r-done} deals with the case in which there are no clients left.
Finally, \refrule{r-par} and \refrule{r-pool} close reductions under parallel
compositions and client pools whereas \refrule{r-struct} allows reductions up to
structural pre-congruence.

Hereafter we write $\wred$ for the reflexive, transitive closure of
$\red$, we write $P \red$ if $P \red Q$ for some $Q$ and $P \nred$
if not $P \red$.
Later on we will also use a restriction of \CoreCSLL dubbed \DetCSLL
whose reduction relation, denoted by $\dred$, is obtained by
removing the rules \refrule{s-pool-comm}, \refrule{s-pool-par},
\refrule{s-par-pool} and \refrule{r-pool} (all those with
``\textsc{pool}'' in their name) from $\red$.  In essence, \DetCSLL
is a more deterministic version of \CoreCSLL in which clients are
forced to connect and reduce in the order in which they appear in
client pools. Also, clients are no longer allowed to cross
restricted channels.

\newcommand{\Lock}{\textsf{Lock}}
\newcommand{\User}{\textsf{User}}

\begin{example}
    \label{ex:lock}
    We illustrate the features of \CoreCSLL by modeling a pool of clients that
    compete to access a shared resource, represented as a simple \emph{lock}.
    When one client manages to acquire the lock, meaning that it has gained
    access to the resource, it prevents other clients from accessing the
    resource until the resource is released.
    We model the lock with this definition:
    \[
      \Let\Lock{x,z}{\Server\x\y{\Wait\y.\Call\Lock{x,z}}{\Close\z}}
    \]

    The lock is a server waiting for connections on the shared channel $x$,
    whereas each user is a client of the lock connecting on $x$. When a
    connection is established, the server waits until the resource is released,
    which is signalled by the termination of the session $y$, and then makes
    itself available again to handle further requests.

    The following process models the concurrent access to the lock by two clients:
    \[
        \Cut\x{\Cons\x\u{\Close\u}\Cons\x\v{\Close\v}\Nil\x}{\Call\Lock{x,z}}
    \]

    The order in which requests are handled by $\Lock$ is non-deterministic
    because of \refrule{s-pool-comm}. In this oversimplified example the users
    are indistinguishable and so non-determinism does not prevent the system to
    be confluent. In \cref{sec:example} we will see a more interesting example
    in which confluence is lost.
    This kind of interaction is typeable in \CoreCSLL thanks to coexponentials,
    which enable the concurrent access to a shared resource.
    \eoe
\end{example}

We conclude this section by defining various termination properties
of interest.
A \emph{run} of $P$ is a (finite or infinite) sequence $(P_0,P_1,\dots)$ of
processes such that $P = P_0$ and $P_i \red P_{i+1}$ whenever $P_{i+1}$ is a
term in the sequence. A run is \emph{maximal} if it is infinite or if it is
finite and its last term (say $Q$) cannot reduce any further (that is, $Q
\nred$).
We say that $P$ is \emph{terminating} if every maximal run of $P$ is finite. We
say that $P$ is \emph{weakly terminating} if $P$ has a maximal finite run.
A run of $P$ is \emph{fair} if it contains finitely many weakly terminating
processes. We say that $P$ is \emph{fairly terminating} if every fair run of $P$
is finite.

A fundamental property of any fairness notion is the fact that every
finite run of a process should be extendable to a maximal fair
one. This property, called
\emph{feasibility}~\cite{AptFrancezKatz87} or \emph{machine
  closure}~\cite{Lamport00}, holds for our fairness notion and
follows immediately from the next proposition.

\begin{proposition}
  \label{prop:feasibility}
  Every process has at least one maximal fair run.
\end{proposition}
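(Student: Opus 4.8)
The plan is to proceed by a case analysis on whether $P$ is weakly terminating, after first recording one trivial observation: every \emph{finite} run is automatically fair, since a finite sequence contains only finitely many processes of any description, and in particular finitely many weakly terminating ones. Thus fairness places a real constraint only on infinite runs. If $P$ is weakly terminating, we are done immediately: by definition $P$ has a maximal finite run, and that run is fair because it is finite.

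The substantive case is when $P$ is \emph{not} weakly terminating. Here I would first establish two auxiliary facts. The first is that non-weak-termination is preserved by reduction: if $Q$ is not weakly terminating and $Q \red Q'$, then $Q'$ is not weakly terminating either. Indeed, were $Q'$ weakly terminating it would possess a maximal finite run, and prepending the step $Q \red Q'$ to that run would exhibit a maximal finite run of $Q$, contradicting the assumption on $Q$. The second fact is that a non-weakly-terminating process must be reducible: if $Q \nred$, then the one-element sequence $(Q)$ is already a run whose last (and only) term cannot reduce, hence a maximal finite run, so $Q$ would be weakly terminating.

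Combining these two facts, the desired run is built by a routine iteration. Starting from $P_0 = P$, which is not weakly terminating, the second observation yields some $P_0 \red P_1$, and the first observation guarantees that $P_1$ is again not weakly terminating; repeating this step indefinitely produces an infinite run $P_0 \red P_1 \red P_2 \red \cdots$ every term of which is non-weakly-terminating. This run is maximal because it is infinite, and it is fair because it contains \emph{zero} weakly terminating processes. The only genuinely load-bearing point is the reduction-closure of non-weak-termination, which rests on the small but essential remark that a normal form constitutes a maximal finite run of length one; the rest of the argument is entirely elementary.
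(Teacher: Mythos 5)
Your proposal is correct and follows essentially the same argument as the paper: a case split on whether $P$ is weakly terminating, with a finite (hence fair) maximal run in the first case and an infinite run through exclusively non-weakly-terminating processes in the second. The only difference is that you spell out in detail the two auxiliary facts (reducibility of non-weakly-terminating processes and preservation of non-weak-termination under reduction) that the paper asserts without proof.
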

\begin{proof}
  For an arbitrary process $P$ there are two possibilities.
  If $P$ is weakly terminating, then there exists $Q$ such that $P \wred Q
  \nred$. From this sequence of reductions we obtain a maximal run of $P$ that
  is fair since it is finite.
  If $P$ is not weakly terminating, then $P \red$ and $P \wred Q$ implies that
  $Q$ is not weakly terminating. In this case we can build an infinite run of
  $P$ which is fair since it does not go through any weakly terminating process.
\end{proof}

The given notion of fair termination admits an alternative characterization that
does not refer to fair runs. This characterization provides us with the key
proof principle to show that well-typed \CoreCSLL processes fairly terminate
(\cref{sec:termination}).

\begin{theorem}
  \label{thm:fair-termination}
  $P$ is fairly terminating iff $P \wred Q$ implies that $Q$ is weakly
  terminating.
\end{theorem}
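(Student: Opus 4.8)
The plan is to prove the biconditional in \cref{thm:fair-termination} by establishing the two implications separately, working from the definitions of fair run and weak termination given just above the statement.

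For the forward direction, I would argue by contraposition: suppose there exists some $Q$ with $P \wred Q$ such that $Q$ is \emph{not} weakly terminating, and exhibit an infinite fair run of $P$, thereby showing $P$ is not fairly terminating. Since $P \wred Q$, there is a finite sequence of reductions from $P$ to $Q$. Because $Q$ is not weakly terminating, the reasoning in the second case of \cref{prop:feasibility} applies: $Q \red$, and every process reachable from $Q$ is again not weakly terminating, so we can extend indefinitely to obtain an infinite run starting at $Q$ that passes through no weakly terminating process. Prefixing this with the finite run from $P$ to $Q$ yields an infinite run of $P$ whose weakly terminating processes are confined to the initial finite segment, hence only finitely many of them. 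This run is therefore fair by definition and infinite, contradicting fair termination of $P$.

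For the backward direction, assume that $P \wred Q$ implies $Q$ is weakly terminating, and let $\Run = (P_0, P_1, \dots)$ be any fair run of $P$; I must show $\Run$ is finite. Suppose for contradiction that $\Run$ is infinite. By fairness, $\Run$ contains only finitely many weakly terminating processes, so there is an index $N$ beyond which no $P_i$ (for $i \geq N$) is weakly terminating. But $P = P_0 \wred P_N$, so by hypothesis $P_N$ is weakly terminating --- a direct contradiction. Hence no infinite fair run exists, and every fair run of $P$ is finite, i.e.\ $P$ is fairly terminating.

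The argument is almost entirely a matter of carefully unwinding the definitions, and the only subtle point --- which I would treat as the main step rather than a genuine obstacle --- is the construction of the non-terminating infinite run in the forward direction. The key observation, already isolated in the proof of \cref{prop:feasibility}, is that weak termination is \emph{downward closed} along reductions in the following sense: if $Q$ is not weakly terminating and $Q \red Q'$, then $Q'$ is not weakly terminating either (otherwise $Q \red Q' \wred {\nred}$ would witness weak termination of $Q$). This lets me build the infinite continuation one reduction at a time while maintaining the invariant that every process encountered is non-weakly-terminating, guaranteeing that the tail of the run contributes no weakly terminating processes and that the run never gets stuck. I would reuse this lemma directly rather than reproving it inline.
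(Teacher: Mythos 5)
Your proposal is correct and is essentially the paper's own proof: your backward direction is exactly the paper's ($\Leftarrow$) argument, and your forward direction is the contrapositive of the paper's ($\Rightarrow$), resting on the same key fact, namely the dichotomy established in \cref{prop:feasibility}. The only presentational difference is that the paper applies \cref{prop:feasibility} to $Q$ as a black box and concatenates the resulting maximal fair run with the finite run witnessing $P \wred Q$, whereas you inline the second case of its proof (non-weak-termination is closed under reduction) to build the infinite fair run directly; both routes are sound.
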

\begin{proof}
  ($\Leftarrow$) Suppose by contradiction that $(P_0,P_1,\dots)$ is an infinite
  fair run of $P$ and note that $P \wred P_i$ for every $i$. From the hypothesis
  we deduce that every $P_i$ is weakly terminating. Then the run contains
  infinitely many weakly terminating processes, which is absurd by definition of
  fair run.
  ($\Rightarrow$)
  Suppose that $P \wred Q$. Then there is a finite run of $P$ that ends in $Q$.
  By \cref{prop:feasibility} there is a maximal fair run of $Q$. By
  concatenating these two runs we obtain a maximal fair run of $P$ that contains
  $Q$. From the hypothesis we deduce that this run is finite. Since $Q$ occurs
  in this run, we conclude that $Q$ is weakly terminating.
\end{proof}


\section{Type System}
\label{sec:types}

In this section we develop the type system of \CoreCSLL.  Types are
defined thus:
\[
    \textbf{Type}
    \qquad
    \TypeT, \TypeS ::= \Bot \mid \One \mid \Top \mid \Zero \mid \TypeT \tjoin \TypeS \mid \TypeT \tfork \TypeS \mid \TypeT \branch \TypeS \mid \TypeT \choice \TypeS \mid \tserver\Type \mid \tclient\Type
\]

Types extend the usual constants and connectives of multiplicative-additive
linear logic with the coexponentials $\tserver\Type$ and $\tclient\Type$ and, in
the context of \CoreCSLL, they describe how channels are used by processes.
Positive types indicate output operations whereas negative types indicate input
operations. In particular: $\One$/$\Bot$ describe a session channel used for
sending/receiving a session termination signal; $\Zero$/$\Top$ describe a
session channel used for sending/receiving an impossible (empty) message;
$\TypeT\tfork\TypeS$/$\TypeT\tjoin\TypeS$ describe a session channel used for
sending/receiving a channel of type $\TypeT$ and then according to $\TypeS$;
$\Type_1\choice\Type_2$/$\Type_1\branch\Type_2$ describe a session channel used
for sending/receiving a label $\InTag_i$ and then according to $\Type_i$;
finally, $\tclient\Type$/$\tserver\Type$ describe a shared channel used for
sending/receiving a connection message establishing a session of type $\Type$.
Each type $\Type$ has a \emph{dual} $\dual\Type$ obtained in the
expected way.  For example, we have
$\dual{(\One \choice \Type)} = \Bot \branch \dual\Type$ and
$\dual{(\tserver\Type)} = \tclient\dual\Type$.

\begin{table}
    \caption{\label{tab:typing-rules} Typing rules for \CoreCSLL.}
    \begin{mathpar}
        \inferrule[\defrule\CallRule]{
            \wtp{P}{\seqof{x : \Type}}
        }{
            \wtp{\Call\A{\seqof\x}}{\seqof{x : \Type}}
        }
        ~\Let\A{\seqof\x}{P}
        \and
        \inferrule[\defrule\CutRule]{
            \wtp{P}{\ContextC, x : \Type}
            \\
            \wtp{Q}{\ContextD, x : \dual\Type}
        }{
            \wtp{\Cut\x{P}{Q}}{\ContextC, \ContextD}
        }
        \and
        \inferrule[\defrule\FailRule]{~}{
            \wtp{\Fail\x}{\Context, x : \Top}
        }
        \and
        \inferrule[\defrule\WaitRule]{
            \wtp\Context{P}
        }{
            \wtp{\Context, x : \Bot}{\Wait\x.P}
        }
        \and
        \inferrule[\defrule\CloseRule]{~}{
            \wtp{\Close\x}{x : \One}
        }
        \and
        \inferrule[\defrule\JoinRule]{
            \wtp{P}{\Context, y : \TypeT, x : \TypeS}
        }{
            \wtp{\Join\x\y.P}{\Context, x : \TypeT \tjoin \TypeS}
        }
        \and
        \inferrule[\defrule\ForkRule]{
            \wtp{P}{\ContextC, y : \TypeT}
            \\
            \wtp{Q}{\ContextD, x : \TypeS}
        }{
            \wtp{\Fork\x\y{P}{Q}}{\ContextC, \ContextD, x : \TypeT \tfork \TypeS}
        }
        \and
        \inferrule[\defrule\CaseRule]{
            \wtp{P}{\Context, x : \TypeT}
            \\
            \wtp{Q}{\Context, x : \TypeS}
        }{
            \wtp{\Case\x{P}{Q}}{\Context, x : \TypeT \branch \TypeS}
        }
        \and
        \inferrule[\defrule\SelectRule]{
            \wtp{P}{\Context, x : \Type_i}
        }{
            \wtp{\Select\x{\InTag_i}.P}{\Context, x : \Type_1 \choice \Type_2}
        }
        \and
        \inferrule[\defrule\ServerRule]{
            \wtp{P}{\Context, x : \tserver\Type, y : \Type}
            \\
            \wtp{Q}\Context
        }{
            \wtp{\Server\x\y{P}{Q}}{\Context, x : \tserver\Type}
        }
        \and
        \inferrule[\defrule\ConsRule]{
          \wtp{P}{\ContextC, y : \Type}
          \\
          \wtp{Q}{\ContextD, x : \tclient\Type}
        }{
            \wtp{\Cons\x\y{P}Q}{\ContextC, \ContextD, x : \tclient\Type}
        }
        \and
        \inferrule[\defrule\NilRule]{~}{
            \wtp{\EmptyPool\x}{x : \tclient\Type}
        }
    \end{mathpar}
\end{table}

The typing rules for \CoreCSLL are shown in
\cref{tab:typing-rules}. Typing judgments have the form
$\wtp{P}\Context$ and relate a process $P$ with a \emph{context}
$\Context$.
Contexts are finite maps from channel names to types written as
$\seqof{x:\Type}$. We let $\ContextC$ and $\ContextD$ range over
contexts, we write $\emptyset$ for the empty context, we write
$\dom\Context$ for the domain of $\Context$, namely for the set of
channel names for which there is an association in $\Context$, and
we write $\ContextC,\ContextD$ for the union of $\ContextC$ and
$\ContextD$ when $\dom\ContextC \cap \dom\ContextD = \emptyset$.

For the most part, the typing rules coincide with those of a
standard session type system based on linear
logic~\cite{Wadler14,LindleyMorris16}. In particular,
\refrule\CutRule, \refrule\FailRule, \refrule\WaitRule,
\refrule\CloseRule, \refrule\JoinRule, \refrule\ForkRule,
\refrule\CaseRule and \refrule\SelectRule relate the standard proof
rules of multiplicative-additive classical linear logic with the
corresponding forms of \CoreCSLL.
The rule \refrule\CallRule deals with process invocations
$\Call\A{\seqof\x}$ by unfolding the global definition of $\A$,
noted as side condition to the rule.
Rule \refrule\ServerRule deals with servers $\Server\x\y{P}{Q}$. The
continuation $P$, which is the actual handler of incoming
connections, must be well typed in a context enriched with the
channel $y$ resulting from the connection. Note that $x$ is still
present in the context and with the same type, meaning that $P$ must
\emph{also} be able to handle any further connection on the shared
channel $x$. The continuation $Q$, which models the behavior of the
server once no more clients are connecting on $x$, is not supposed
to use $x$ any longer.
Rule \refrule\ConsRule deals with non-empty client pools
$\Cons\x\y{P}Q$. The client $P$ is connecting with a server through
a shared channel $x$ and establishes a session $y$. The rest of the
pool $Q$ is using $x$ in the same way.
Rule \refrule\NilRule deals with the empty pool of clients
connecting on $x$.

The typing rules are interpreted coinductively. Therefore, a judgment
$\wtp{P}\Context$ is derivable if there is a possibly infinite typing derivation
for it. The need for infinite typing derivations stems from the fact that we
type process invocations by ``unfolding'' them to the process they represent, so
this unfolding may go on forever in the case of recursive processes.

\begin{example}
    \label{ex:lock-typing}
    Let us consider once again the process definitions in \cref{ex:lock}. We
    derive
    \[
        \begin{prooftree}
            \[
                \[
                    \[
                        \smash\vdots
                        \justifies
                        \wtp{\Call\Lock{x,z}}{x : \tserver\Bot, z : \One}
                        \using\refrule\CallRule
                    \]
                    \justifies
                    \wtp{\Wait\y.\Call\Lock{x,z}}{x : \tserver\Bot, y : \Bot, z : \One}
                    \using\refrule\WaitRule
                \]
                \[
                    \justifies
                    \wtp{\Close\z}{z : \One}
                    \using\refrule\CloseRule
                \]
                \justifies
                \wtp{\Server\x\y{\Wait\y.\Call\Lock{x,z}}{\Close\z}}{x : \tserver\Bot, z : \One}
                \using\refrule\ServerRule
            \]
            \justifies
            \wtp{\Call\Lock{x,z}}{x : \tserver\Bot, z : \One}
            \using\refrule\CallRule
        \end{prooftree}
    \]
    showing that $\Lock$ is well typed. Note that the typing derivation is
    infinite since $\Lock$ is a recursive process.
    We can now obtain the following typing derivation
    \[
        \begin{prooftree}
            \[
              \[
                \justifies
                \wtp{\Close\u}{u : \One}
                \using\refrule\CloseRule
                \]
                \[
                  \[
                    \justifies
                    \wtp{\Close\v}{v : \One}
                    \using\refrule\CloseRule
                  \]
                  \[
                    \justifies
                    \wtp{\Nil\x}{x : \tclient\One}
                    \using\refrule\NilRule
                  \]
                  \justifies
                  \wtp{
                    \Cons\x\v{\Close\v}
                    \Nil\x
                  }{
                    x : \tclient\One
                  }
                  \using\refrule\ConsRule
                \]
                \justifies
                \wtp{
                  \Cons\x\u{\Close\u}
                  \Cons\x\v{\Close\v}
                  \Nil\x
                }{
                    x : \tclient\One
                }
                \using\refrule\ConsRule
            \]
            \[
                \smash\vdots
                \justifies
                \wtp{\Call\Lock{x,z}}{x : \tserver\Bot, z : \One}
            \]
            \justifies
            \wtp{
                \Cut\x{
                  \Cons\x\u{\Close\u}
                  \Cons\x\v{\Close\v}
                  \Nil\x
                }{
                  \Call\Lock{x,z}
                }
            }{
                z : \One
            }
            \using\refrule\CutRule
        \end{prooftree}
    \]
    showing that the system as a whole is well typed.
    \eoe
\end{example}

Adopting an infinitary type system will make it easy to relate \CoreCSLL with
\muMALL (\cref{sec:termination}). However, we must be careful in that some
infinite typing derivations allow us to type processes that are not weakly
terminating, as illustrated in the next example.

\begin{example}[non-terminating process]
    \label{ex:omega}
    Consider the process $\Let\Omega{}{\Cut\x{\Close\x}{\Wait\x.\Omega}}$ which
    creates a session $x$, immediately closes it and then repeats the same
    behavior. Clearly, this process is not weakly terminating because it can
    only reduce thus:
    \[
        \Call\Omega{} \pcong
        \Cut\x{\Close\x}{\Wait\x.\Call\Omega{}} \red
        \Call\Omega{} \pcong
        \Cut\x{\Close\x}{\Wait\x.\Call\Omega{}} \red
        \cdots
    \]

    Nonetheless, we are able to find the following (infinite) typing
    derivation for $\Call\Omega{}$.
    \[
        \begin{prooftree}
            \[
                \[
                    \justifies
                    \wtp{
                        \Close\x
                    }{
                        x : \One
                    }
                    \using\refrule\CloseRule
                \]
                \[
                    \[
                        \smash\vdots\mathstrut
                        \justifies
                        \wtp{
                            \Call\Omega{}
                        }{
                            \emptyset
                        }
                        \using\refrule\CallRule
                    \]
                    \justifies
                    \wtp{
                        \Wait\x.\Call\Omega{}
                    }{
                        x : \Bot
                    }
                    \using\refrule\WaitRule
                \]
                \justifies
                \wtp{
                    \Cut\x{\Close\x}{\Wait\x.\Call\Omega{}}
                }{
                    \emptyset
                }
                \using\refrule\CutRule
            \]
            \justifies
            \wtp{
                \Call\Omega{}
            }{
                \emptyset
            }
            \using\refrule\CallRule
        \end{prooftree}
    \]

    Since we aim at ensuring fair termination for well-typed
    processes, we must consider this derivation as \emph{invalid}.
    \eoe
\end{example}

In order to rule out processes like $\Omega$ in \cref{ex:omega}, we identify a
class of valid typing derivations as follows.

\begin{definition}[valid typing derivation]
    \label{def:valid-typing}
    A typing derivation is \emph{valid} if every infinite branch in it goes
    through infinitely many applications of the rule \refrule\ServerRule
    concerning the same channel.
\end{definition}

This validity condition requires that every infinite branch of a typing
derivation describes the behavior of a server willing to accept an unbounded
number of connection requests.
If we look back at the infinite typing derivation for the $\Lock$ process in
\cref{ex:lock-typing}, we see that it is valid according to
\cref{def:valid-typing} since the only infinite branch in it goes through
infinitely many applications of the rule \refrule\ServerRule concerning the very
same shared channel $x$.
On the contrary, the typing derivation in \cref{ex:omega} is invalid
since the infinite branch in it does not go through any application
of \refrule\ServerRule.

The fact that every infinite branch must go through infinitely many
applications of \refrule\ServerRule \emph{concerning the very same
  shared channel} is a subtle point. Without the specification that
it is the \emph{same} shared channel to be found infinitely often,
it would be possible to obtain invalid typing derivations as
illustrated by the next example.

\begin{example}
    \label{ex:non-terminating-server}
    \newcommand{\OmegaServer}{\Omega\textsf{-Server}}
    Consider the definition
    \[
        \Let\OmegaServer\x{
            \Server\x\y{
                \Wait\y.\Call\OmegaServer\x
            }{
                \Cut\z{\EmptyPool\z}{\Call\OmegaServer\z}
            }
        }
    \]
    describing a server that waits for connections on the shared channel $x$.
    After each request, the server makes itself available again for handling
    more requests by the recursive invocation $\Call\OmegaServer\x$. Once all
    requests have been processed, the server creates a new shared channel on
    which an analogous server operates.
    Using the typing rules in \cref{tab:typing-rules} we are able to find the
    following typing derivation:
    \[
        \begin{prooftree}
            \[
                \[
                    \[
                        \smash\vdots\mathstrut
                        \justifies
                        \wtp{\Call\OmegaServer\x}{x : \tserver\Bot}
                        \using\refrule\CallRule
                    \]
                    \justifies
                    \wtp{
                        \Wait\y.\Call\OmegaServer\x
                    }{
                        x : \tserver\Bot,
                        y : \Bot
                    }
                    \using\refrule\WaitRule
                \]
                \[
                    \[
                        \justifies
                        \wtp{\EmptyPool\z}{z : \tclient\One}
                        \using\refrule\NilRule
                    \]
                    \[
                        \smash\vdots
                        \justifies
                        \wtp{\Call\OmegaServer\z}{z : \tserver\Bot}
                        \using\refrule\CallRule
                    \]
                    \justifies
                    \wtp{
                        \Cut\z{\EmptyPool\z}{\Call\OmegaServer\z}
                    }{
                        \emptyset
                    }
                    \using\refrule\CutRule
                \]
                \justifies
                \wtp{
                    \Server\x\y{
                        \Wait\y.\Call\OmegaServer\x
                    }{
                        \Cut\z{\EmptyPool\z}{\Call\OmegaServer\z}
                    }
                }{
                    x : \tserver\Bot
                }
                \using\refrule\ServerRule
            \]
            \justifies
            \wtp{\Call\OmegaServer\x}{x : \tserver\Bot}
            \using\refrule\CallRule
        \end{prooftree}
    \]

    Notice that the derivation bifurcates in correspondence of the application
    of \refrule\ServerRule and also that each sub-tree is infinite, since it
    contains an unfolding of the $\OmegaServer$ process.
    For this reason, the derivation contains (infinitely) many
    infinite branches, which are obtained by either ``going left''
    or ``going right'' each time \refrule\ServerRule is
    encountered. Each of these infinite branches goes through an
    application of \refrule\ServerRule infinitely many times, as
    requested by \cref{def:valid-typing}.
    Also, any such branch that ``goes right'' finitely many times
    eventually ends up going through infinitely many applications of
    \refrule\ServerRule that concern the same channel. In contrast,
    any branch that ``goes right'' infinitely many times keeps going
    through applications of \refrule\ServerRule concerning new
    shared channels created in correspondence of the application of
    \refrule\CutRule.
    In conclusion, this typing derivation is invalid and rightly so,
    or else the diverging process
    $\Cut\x{\EmptyPool\x}{\Call\OmegaServer\x}$ would be well typed
    in the empty context.
    \eoe
\end{example}

We conclude this section by stating two key properties of the type
system, starting from the fact that typing is preserved by
structural pre-congruence and reductions.

\begin{theorem}
    \label{thm:preservation}
    \newcommand{\rrel}{\mathrel{\mathcal{R}}}
    Let $P \rrel Q$ where ${\rrel}\in\set{{\pcong},{\red}}$. Then
    $\wtp{P}\Context$ implies $\wtp{Q}\Context$.
\end{theorem}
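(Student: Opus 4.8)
The plan is to prove the claim in two stages, first for $\pcong$ and then for $\red$, since the reduction rule \refrule{r-struct} appeals to structural pre-congruence on both sides. In each stage I proceed by induction on the derivation of $P\mathrel{\mathcal R}Q$ with a case analysis on the last rule applied, as listed in \cref{tab:semantics}. The crucial observation is that the typing rules of \cref{tab:typing-rules} are syntax-directed: the shape of a process determines which rule concludes its typing. Hence every case reduces to \emph{inverting} the typing derivation of $P$ to expose the sub-derivations of its components, and then \emph{reassembling} them into a typing derivation for $Q$ with the same context $\Context$. The coinductive, infinitary reading of the judgment is not an obstacle here: inversion on a single rule application is well-defined whether or not the derivation is finite, and the exposed sub-derivations are reused verbatim as sub-derivations of the rebuilt proof. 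The induction is carried out on the \emph{inductively} defined relations $\pcong$ and $\red$, so it is well-founded.

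For the structural axioms the reassembly is merely a rearrangement of \refrule\CutRule and \refrule\ConsRule nodes. \refrule{s-par-comm} uses that duality is involutive, so the two premises of a cut may be exchanged after replacing $\Type$ by $\dual\Type$; \refrule{s-pool-comm} inverts two nested applications of \refrule\ConsRule and swaps the independent context fragments contributed by the two clients; and \refrule{s-call} is immediate from the premise of \refrule\CallRule, which types the body of the definition in the same context. The congruence and transitivity closures of $\pcong$ are handled by applying the induction hypothesis to the appropriate premise of \refrule\CutRule or \refrule\ConsRule and rebuilding the node.

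The reduction cases split into the closure rules and the principal reductions. For \refrule{r-par} and \refrule{r-pool} I invert the concluding \refrule\CutRule, respectively \refrule\ConsRule, apply the induction hypothesis to the reducing premise, and reassemble; for \refrule{r-struct} I compose the result already obtained for $\pcong$ with that for $\red$. The principal reductions \refrule{r-close}, \refrule{r-comm}, \refrule{r-case}, \refrule{r-done} and \refrule{r-connect} are the expected cut-reduction correspondences. In each case I invert the outermost \refrule\CutRule together with the two rules typing its immediate sub-processes — for instance \refrule\ConsRule and \refrule\ServerRule in the case of \refrule{r-connect} — observe that the cut formula and its dual match up (so that, e.g., $\dual{\tclient\Type} = \tserver\dual\Type$ aligns the established connection), and recombine the exposed sub-derivations with one or two fresh cuts on the residual channels. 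A short context computation then confirms that $Q$ is typed in exactly $\Context$.

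The main obstacle lies in the mixed-associativity axioms \refrule{s-par-assoc}, \refrule{s-pool-par} and \refrule{s-par-pool}, whose side conditions constrain free names. To re-split the context correctly when reassociating nested cuts (and cut against cons), I would rely on an auxiliary lemma relating free names to context domains: if $\wtp{P}{\Context}$ then $\fn P \subseteq \dom\Context$, proved by a routine induction on typing. This lemma turns each syntactic side condition (such as $x \in \fn Q$ and $y \notin \fn P$ for \refrule{s-par-assoc}) into the guarantee that the channel in question occurs in the intended half of the context split produced by \refrule\CutRule, so that the reassociated derivation is well-formed and still types $Q$ in $\Context$. The delicate point is that the \refrule\FailRule rule admits arbitrary $\Top$-typed weakening, so the agreement between free names and domains is not an equality; I would therefore argue separately that any context entry absent from the free names carries type $\Top$ and hence can be freely relocated to the correct side of the reassociated cut. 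Once this bookkeeping is discharged, the remaining cases are routine.
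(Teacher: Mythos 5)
Your plan coincides with the paper's own proof (\cref{lem:pcong,lem:sr} in the appendix): split into $\pcong$ and $\red$, induct on the derivation of the relation, and in each case invert the syntax-directed typing rules and reassemble. The genuine problem is the patch you propose for the mismatch between free names and context domains. The claim that ``any context entry absent from the free names carries type $\Top$'' is false: \refrule\FailRule types $\Fail\x$ in $\Context, x : \Top$ for an \emph{arbitrary} $\Context$, so the unused entries can carry any type whatsoever --- for instance $\wtp{\Fail{u}}{z : \One, u : \Top}$ is derivable, and its unused entry is $z : \One$; the only entry the rule forces to be $\Top$-typed is $x$ itself, which \emph{is} free. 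Moreover, even a genuinely $\Top$-typed unused entry cannot be ``freely relocated'' across a cut: the system admits no weakening (\refrule\CloseRule and \refrule\NilRule have exact contexts), so the sub-derivation on the other side of the cut has in general no way to absorb an entry you move onto it. So this step fails, and it fails exactly where you need it: for \refrule{s-par-pool}, after inverting \refrule\ConsRule on $\Cons\x\y{P_1}{P_2}$, the negative condition $z \notin \fn{\Client\x\y.P_1}$ does not by itself place $z$ in $P_2$'s context, because a \refrule\FailRule leaf inside $P_1$ may have absorbed $z$, at any type, into $P_1$'s context.

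The workable repair uses no relocation at all. The cut channel is pinned to the correct side of the split by the \emph{positive} (connectedness) conditions: when $z \in \fn{P_2}$, your lemma $\fn{P} \subseteq \dom{\Context}$ puts $z$ in $P_2$'s context, and the disjointness required by context union excludes it from $P_1$'s. The negative conditions serve only to prevent capture; any residual clash between a bound channel and a context entry planted by weakening is dispatched by alpha-renaming the bound channel (with a standard renaming lemma), not by moving entries between premises. For what it is worth, the paper's own proof of \cref{lem:pcong} makes the same silent jump you tried to justify --- it reads $z\not\in\fn{\Client\x\y.P_1}$ as determining where $z$ sits in the split --- so you have put your finger on a real subtlety; your justification of it, however, is unsound. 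Separately, you never check that the rebuilt derivation is \emph{valid} in the sense of \cref{def:valid-typing}, which is what well-typedness means throughout the paper: the paper handles this explicitly by observing that each rewrite either modifies a finite region of the derivation or deletes an entire subtree, so every surviving infinite branch still crosses infinitely many applications of \refrule\ServerRule on the same channel. A complete proof must include that observation.
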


Also, processes that are well typed in a context of the form
$x : \One$ are deadlock free.

\begin{restatable}[deadlock freedom]{theorem}{thmdf}
    \label{thm:df}
    If $\wtp{P}{x : \One}$ then either $P \pcong \Close\x$ or $P \dred$.
\end{restatable}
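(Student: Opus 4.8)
The plan is to prove a progress property by analysing the tree of cuts of $P$. First I would bring $P$, modulo $\pcong$, to a \emph{head normal form}: using \refrule{s-call} to unfold process invocations and \refrule{s-par-comm}/\refrule{s-par-assoc} to reassociate, every process is structurally precongruent either to a single action prefix or to a cut $\Cut{c}{Q}{R}$ whose two sides are themselves in head normal form. The key point is that this normalisation \emph{terminates}: an infinite chain of \refrule{s-call} unfoldings that never exposes a prefix or a cut would yield an infinite typing branch with no application of \refrule\ServerRule, contradicting validity of the derivation (\cref{def:valid-typing}). Hence $P$ determines a finite \emph{cut tree} whose internal (binary) nodes are cuts and whose leaves are prefixes, with their continuations kept atomic.

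Next I would exploit linearity. Each channel — the free channel $x$ as well as every cut channel — occurs in the context of exactly one leaf, and each cut channel occurs, with dual types, in exactly two leaves, one on each side of its cut. Regarding the cut channels as edges between the leaves they connect yields a graph with $n$ leaves and $n-1$ edges which is connected, hence a \emph{tree}. To each leaf I associate its \emph{subject}, the channel on which its head prefix acts; since the subject belongs to the leaf's context, it is either $x$ or one of the cut channels incident to that leaf.

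The analysis then splits on the unique leaf $\ell_x$ owning $x$. If the subject of $\ell_x$ is $x$ itself, then, because $x : \One$, the only applicable typing rule is \refrule\CloseRule, so $\ell_x = \Close\x$; this leaf has no further free channel, hence no incident edge, and since the cut tree is connected it must consist of $\ell_x$ alone, giving $P \pcong \Close\x$. Otherwise the subject of $\ell_x$ is a cut channel, so \emph{every} leaf has its subject among its incident edges, i.e. out-degree one in the functional ``waiting-for'' graph sending each leaf to the partner at the other end of the channel it is blocked on. A finite functional graph contains a cycle, and since the underlying graph is a tree every closed walk traverses some edge in both directions; that edge's two endpoints then both act on it, i.e. they are dual prefixes both ready on the same cut channel. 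Bringing these two partners adjacent by reassociating cuts with \refrule{s-par-comm} and \refrule{s-par-assoc} exposes a redex for one of \refrule{r-close}, \refrule{r-comm}, \refrule{r-case}, \refrule{r-connect} or \refrule{r-done}, none of which is a pool rule, so $P \dred$.

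The main obstacle is this last step: showing that the ready pair can always be exposed as an actual redex \emph{within the deterministic fragment}, using only cut reassociation and none of the pool rules removed in \DetCSLL. I expect this to rely on treating each client pool $\Cons\x\y{P}Q$ as an atomic leaf, relocated wholesale by \refrule{s-par-assoc}, together with a check that its side conditions $x \in \fn{Q}\setminus\fn{R}$ and $y \notin \fn{P}$ are met at each reassociation (these are guaranteed by well-typedness). A secondary delicate point is justifying that the cut channels really form a tree — the acyclicity and connectedness of the cut structure that underpins the cycle argument — while noting that $\Top$ and $\Zero$ are vacuous here, since no well-typed leaf can act on a $\Zero$-typed channel.
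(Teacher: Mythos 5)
Your proposal is correct and its architecture coincides with the paper's: the paper first proves \cref{lem:unfolded} (termination of \refrule{s-call} unfolding, justified by exactly your validity argument), then locates two unguarded guards acting on the same restricted channel, then uses a proximity lemma (\cref{lem:proximity}, proved with only \refrule{s-par-comm} and \refrule{s-par-assoc}) to bring them side by side, and finally performs the case analysis on dual prefixes --- treating client pools and $\Nil\x$ as atomic guards so that no pool rule is ever needed, exactly as you anticipate for \DetCSLL. The one place where you genuinely diverge is the combinatorial core. The paper counts unguarded guards against unguarded cuts, asserts $\threads{P} > \channels{P}$, and invokes a pigeonhole to obtain two guards sharing a subject; you instead build the waiting-for functional graph on the leaves of the cut tree and extract a two-cycle. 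Your version is in fact the more careful one: since $\threads{P} = \channels{P} + 1$ exactly, the bare pigeonhole does not exclude the possibility that all subjects are distinct, and ruling that out requires precisely your connectivity argument --- the leaf whose subject would be $x$ must be $\Close\x$, whose typing context is the singleton $x : \One$, so it has no incident cut edge and connectedness forces the whole cut tree to collapse to that single leaf. This is also exactly where the disjunct $P \pcong \Close\x$ of the statement emerges, a case the paper's sketch passes over silently. In short: same approach and same supporting lemmas, with your middle step supplying rigor (simplicity of the leaf graph, the cycle-in-a-tree argument, the $\Top$/$\Zero$ vacuity) that the paper leaves implicit.
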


Note that \cref{thm:df} uses $\dred$ instead of $\red$ in order to state that
$P$ is able to reduce if it is not (structurally pre-congruent to) $\Close\x$.
Recalling that ${\dred} \subseteq {\red}$, the deadlock freedom property ensured
by \cref{thm:df} is slightly stronger than one would normally expect. This
formulation will be necessary in \cref{sec:termination} when proving the
soundness of the type system. The proofs of \cref{thm:preservation,thm:df} can
be found in \cref{sec:extra-types}.

\newcommand{\ExpandLink}[1]{\textsf{Link}_{#1}}

\begin{example}[forwarder]
    \label{ex:forwarder}
    Most session calculi based on linear logic include a form $\Link\x\y$ whose
    typing rule $\wtp{\Link\x\y}{x : \Type, y : \dual\Type}$ corresponds to the
    axiom $\vdash \Type, \dual\Type$ of linear logic.
    The form $\Link\x\y$ is usually interpreted as a forwarder between the
    channels $x$ and $y$ and it is useful for example to model the output of a
    free channel $\Send\x\y.P$ as the term $\Fork\x\z{\Link\y\z}{P}$.
    In this example we show that there is no need to equip \CoreCSLL
    with a native form $\Link\x\y$ since its behavior can be encoded
    as a well-typed \CoreCSLL process. To this aim, we define a
    family $\ExpandLink\Type$ of process definitions by induction on
    $\Type$ as follows
    \begin{align*}
        \Let{\ExpandLink\Bot}{x,y&}{\Wait\x.\Close\y} \\
        \Let{\ExpandLink\Top}{x,y&}{\Fail\x} \\
        \Let{\ExpandLink{\TypeT \tjoin \TypeS}}{x,y&}{
            \Join\x\u.\Fork\y\v{
                \Call{\ExpandLink\TypeT}{u,v}
            }{
                \Call{\ExpandLink\TypeS}{x,y}
            }
        } \\
        \Let{\ExpandLink{\TypeT \branch \TypeS}}{x,y&}{
            \Case\x{\Left\y.\Call{\ExpandLink\TypeT}{x,y}}{\Right\y.\Call{\ExpandLink\TypeS}{x,y}}
        } \\
        \Let{\ExpandLink{\tserver\Type}}{x,y&}{
            \Server\x\u{
                \Cons\y\v{
                    \Call{\ExpandLink\TypeT}{u,v}
                }{
                    \Call{\ExpandLink{\tserver\Type}}{x,y}
                }
            }{
                \Nil\y
            }
        }
    \end{align*}
    with the addition of the definitions
    $\Let{\ExpandLink\Type}{x,y}{\Call{\ExpandLink{\dual\Type}}{y,x}}$ for the
    positive type constructors.
    It is easy to build a typing derivation for the judgment
    $\wtp{\Call{\ExpandLink\Type}{x,y}}{x : \Type, y : \dual\Type}$.
    Also, every infinite branch in such derivation eventually loops through an
    invocation of the form $\Call{\ExpandLink{\tserver\TypeS}}{u,v}$, which goes
    through an application of \refrule\ServerRule concerning the channel $u$.
    So, the derivation of $\wtp{\Call{\ExpandLink\Type}{x,y}}{x : \Type, y :
    \dual\Type}$ is valid and the process $\Call{\ExpandLink\Type}{x,y}$ is well
    typed.
    \eoe
\end{example}


\section{A quick recollection of \muMALL}
\label{sec:mumall}

In this section we recall the main elements of
\muMALL~\cite{Doumane17,BaeldeDoumaneSaurin16,BaeldeEtAl22}, the
infinitary proof system of the multiplicative additive fragment of
linear logic extended with least and greatest fixed points.
The syntax of \muMALL \emph{pre-formulas} makes use of an infinite
set of \emph{propositional variables} ranged over by $X$ and $Y$ and
is given by the grammar below:
\[
  \textbf{Pre-formula}
  \qquad
  \FormulaF, \FormulaG ::= \X \mid \Bot \mid \Top \mid \Zero \mid \One \mid \FormulaF \tjoin \FormulaG \mid \FormulaF \tfork \FormulaG \mid \FormulaF \branch \FormulaG \mid \FormulaF \choice \FormulaG \mid \tnu\X.\FormulaF \mid \tmu\X.\FormulaF
\]

The fixed point operators $\tmu$ and $\tnu$ are the binders of
propositional variables and the notions of free and bound variables
are defined accordingly. A \muMALL \emph{formula} is a closed
pre-formula. We write $\subst\Formula\X$ for the capture-avoiding
substitution of all free occurrences of $X$ with $\Formula$ and
$\dual\Formula$ for the \emph{dual} of $\Formula$, which is the
involution such that
\[
    \dual\X = X
    \qquad
    \dual{(\tmu\X.\Formula)} = \tnu\X.\dual\Formula
    \qquad
    \dual{(\tnu\X.\Formula)} = \tmu\X.\dual\Formula
\]
among the other expected equations. Postulating that $\dual\X = \X$
is not a problem since we will always dualize formulas, which do not
contain free propositional variables.

We write $\subf$ for the \emph{subformula ordering}, that is the
least partial order such that $\FormulaF \subf \FormulaG$ if
$\FormulaF$ is a subformula of $\FormulaG$. For example, if
$\FormulaF \eqdef \tmu\X.\tnu\Y.(X \choice Y)$ and
$\FormulaG \eqdef \tnu\Y.(\FormulaF \choice Y)$ we have
$\FormulaF \subf \FormulaG$ and
$\FormulaG \not\subf \FormulaF$.
When $\Formulas$ is a set of formulas, we write $\minf\Formulas$ for
its $\subf$-minimum formula if it is defined.
Occasionally we let $\star$ stand for an arbitrary binary connective
(one of $\choice$, $\tfork$, $\branch$, or $\tjoin$) and $\sigma$
stand for an arbitrary fixed point operator (either $\tmu$ or
$\tnu$).

In \muMALL it is important to distinguish among different
\emph{occurrences} of the same formula in a proof derivation. To
this aim, formulas are annotated with \emph{addresses}.
We assume an infinite set $\AddressSet$ of \emph{atomic addresses},
$\dual\AddressSet$ being the set of their duals such that
$\AddressSet \cap \dual\AddressSet = \emptyset$ and
$\dual{\dual\AddressSet{}} = \AddressSet$. We use $a$ and $b$ to
range over elements of $\AddressSet \cup \dual\AddressSet$.  An
\emph{address} is a string $aw$ where $w \in \set{i,l,r}^*$. The
dual of an address is defined as $\dual{(aw)} = \dual{a}w$.
We use $\addressA$ and $\addressB$ to range over addresses, we write
$\prefix$ for the prefix relation on addresses and we say that
$\addressA$ and $\addressB$ are \emph{disjoint} if
$\addressA \not\prefix \addressB$ and
$\addressB \not\prefix \addressA$.

A \emph{formula occurrence} (or simply occurrence) is a pair
$\Formula_\address$ made of a formula $\Formula$ and an address
$\address$. We use $F$ and $G$ to range over occurrences and we
extend to occurrences several operations defined on formulas. In
particular: we use logical connectives to compose occurrences so
that
$\FormulaF_{\address l} \mathbin\star \FormulaG_{\address r} \eqdef
(\FormulaF \mathbin\star \FormulaG)_\address$ and
$\sigma\X.\Formula_{\address i} \eqdef
(\sigma\X.\Formula)_\address$; the dual of an occurrence is obtained
by dualizing both its formula and its address, that is
$\dual{(\Formula_\address)} \eqdef \dual\Formula_{\dual\address}$;
occurrence substitution preserves the address in the type within
which the substitution occurs, but forgets the address of the
occurrence being substituted, that is
$\FormulaF_\addressA\subst{\FormulaG_\addressB}\X \eqdef
\FormulaF\subst\FormulaG\X_\addressA$.

We write $\strip{F}$ for the formula obtained by forgetting the
address of $F$.  Finally, we write $\tred$ for the least reflexive
relation on types such that $F_1 \star F_2 \tred F_i$ and
$\sigma\X.F \tred F\subst{\sigma\X.F}\X$.

\begin{table}
    \caption{\label{tab:mumall} Proof rules of \muMALL~\cite{BaeldeDoumaneSaurin16,Doumane17,BaeldeEtAl22}.}
    \begin{mathpar}
        \inferrule[\refrule\CutRule]{
            \vdash \SequentS, F
            \\
            \vdash \SequentT, \dual{F}
        }{
            \vdash \SequentS, \SequentT
        }
        \and
        \inferrule[\refrule\FailRule]{~}{
            \vdash \Sequent, \Top
        }
        \and
        \inferrule[\refrule\WaitRule]{
            \vdash \Sequent
        }{
            \vdash \Sequent, \Bot
        }
        \and
        \inferrule[\refrule\CloseRule]{~}{
            \vdash \One
        }
        \and
        \inferrule[\refrule\JoinRule]{
            \vdash \Sequent, F, G
        }{
            \vdash \Sequent, F \tjoin G
        }
        \and
        \inferrule[\refrule\ForkRule]{
            \vdash \SequentS, F
            \\
            \vdash \SequentT, G
        }{
            \vdash \SequentS, \SequentT, F \tfork G
        }
        \and
        \inferrule[\refrule\CaseRule]{
            \vdash \Sequent, F
            \\
            \vdash \Sequent, G
        }{
            \vdash \Sequent, F \branch G
        }
        \and
        \inferrule[\refrule\SelectRule]{
            \vdash \Sequent, F_i
        }{
            \vdash \Sequent, F_1 \choice F_2
        }
        \and
        \inferrule[\refrule\CorecRule]{
            \vdash \Sequent, F\subst{\tnu\X.F}\X
        }{
            \vdash \Sequent, \tnu\X.F
        }
        \and
        \inferrule[\refrule\RecRule]{
            \vdash \Sequent, F\subst{\tmu\X.F}\X
        }{
            \vdash \Sequent, \tmu\X.F
        }
    \end{mathpar}
\end{table}

The proof rules of \muMALL are shown in \cref{tab:mumall}, where
$\SequentS$ and $\SequentT$ range over sets of occurrences written
as $F_1, \dots, F_n$. The rules allow us to derive \emph{sequents}
of the form $\vdash \Sequent$ and are standard except for
\refrule\CorecRule, which \emph{unfolds} a greatest fixed point just
like \refrule\RecRule does.
Being an infinitary proof system, \muMALL rules are meant to be
interpreted coinductively. That is, a sequent $\vdash \Sequent$ is
derivable if there exists an \emph{arbitrary} (finite or infinite)
proof derivation whose conclusion is $\vdash \Sequent$. Without a
\emph{validity condition} on derivations, such proof system is
notoriously unsound.
\muMALL's validity condition requires every infinite branch of a
derivation to be supported by the continuous unfolding of a greatest
fixed point. In order to formalize this condition, we start by
defining \emph{threads}, which are sequences of occurrences.

\begin{definition}[thread]
    \label{def:thread}
    A \emph{thread} of $F$ is a (finite or infinite) sequence of
    occurrences $(F_0,F_1,\dots)$ such that $F_0 = F$ and
    $F_i \tred F_{i+1}$ whenever $i+1$ is a valid index of the
    sequence.
\end{definition}

Hereafter we use $t$ to range over threads. For example, if we
consider $\Formula \eqdef \tmu\X.(X \choice \One)$, we have that
$t \eqdef
(\Formula_a,(\Formula\choice\One)_{ai},\Formula_{ail},\dots)$ is an
infinite thread of $\Formula_a$.
%

Among all threads, we are interested in finding those in which a
$\tnu$-formula is unfolded infinitely often. These threads, called
$\tnu$-threads, are precisely defined thus:

\begin{definition}[$\tnu$-thread]
    \label{def:nu-thread}
    Let $t = (F_0,F_1,\dots)$ be an infinite thread, let $\strip{t}$ be the
    corresponding sequence $(\strip{F_0},\strip{F_1},\dots)$ of formulas and let
    $\InfOften{t}$ be the set of elements of $\strip{t}$ that occur infinitely
    often in $\strip{t}$. We say that $t$ is a \emph{$\tnu$-thread} if
    $\minf\InfOften{t}$ is defined and is a $\tnu$-formula.
\end{definition}

If we consider the infinite thread $t$ above, we have
$\InfOften{t} = \set{\Formula, \Formula \choice \One}$ and
$\minf\InfOften{t} = \Formula$, so $t$ is \emph{not} a $\tnu$-thread
because $\Formula$ is not a $\tnu$-formula.
Consider instead $\FormulaF \eqdef \tnu\X.\tmu\Y.(X \choice Y)$ and
$\FormulaG \eqdef \tmu\Y.(\FormulaF \choice Y)$ and observe that
$\FormulaG$ is the ``unfolding'' of $\FormulaF$. Now
$t_1 \eqdef (\FormulaF_a, \FormulaG_{ai}, (\FormulaF \choice
\FormulaG)_{aii}, \FormulaF_{aiil}, \dots)$ is a thread of
$\FormulaF_a$ such that
$\InfOften{t_1} = \set{\FormulaF, \FormulaG, \FormulaF \choice
  \FormulaG}$ and we have $\minf\InfOften{t_1} = \FormulaF$ because
$\FormulaF \subf \FormulaG$, so $t_1$ is a $\tnu$-thread.
If, on the other hand, we consider the thread
$t_2 \eqdef (\FormulaF_a, \FormulaG_{ai}, (\FormulaF \choice
\FormulaG)_{aii}, \FormulaG_{aiir}, (\FormulaF \choice
\FormulaG)_{aiiri}, \dots)$ such that
$\InfOften{t_2} = \set{\FormulaG, \FormulaF \choice \FormulaG}$ we
have $\minf\InfOften{t_2} = \FormulaG$ because
$\FormulaG \subf \FormulaF \choice \FormulaG$, so $t_2$ is not a
$\tnu$-thread.
Intuitively, the $\subf$-minimum formula among those that occur
infinitely often in a thread is the outermost fixed point operator
that is being unfolded infinitely often. It is possible to show that
this minimum formula is always well defined~\cite{Doumane17}. If
such minimum formula is a greatest fixed point operator, then the
thread is a $\tnu$-thread.
Note that a $\tnu$-thread is necessarily infinite.

Now we proceed by identifying threads along branches of proof
derivations. To this aim, we provide a precise definition of
\emph{branch}.

\begin{definition}[branch]
    \label{def:branch}
    A \emph{branch} of a proof derivation is a sequence $(\vdash \Sequent_0,
    \vdash \Sequent_1, \dots)$ of sequents such that $\vdash \Sequent_0$ occurs
    somewhere in the derivation and $\vdash \Sequent_{i+1}$ is a premise of the
    rule application that derives $\vdash \Sequent_i$ whenever $i+1$ is a valid
    index of the sequence.
\end{definition}

An infinite branch is valid if supported by a $\tnu$-thread that
originates somewhere therein.

\begin{definition}
    \label{def:valid-branch}
    Let $\gamma = (\vdash \Sequent_0, \vdash \Sequent_1, \dots)$ be an infinite
    branch in a derivation. We say that $\gamma$ is \emph{valid} if there exists
    $I \subseteq \Nat$ such that $(F_i)_{i\in I}$ is a $\tnu$-thread and $F_i
    \in \Sequent_i$ for every $i\in I$.
\end{definition}

\begin{definition}
    \label{def:valid-proof}
    A \muMALL derivation is \emph{valid} if so are its infinite branches.
\end{definition}


\section{Fair Termination of \CoreCSLL}
\label{sec:termination}

\newcommand{\cmap}{\sigma}
\newcommand{\amap}{\rho}
\newcommand{\extend}[2]{#1 \mapsto #2}
\newcommand{\even}[1]{\mathsf{even}(#1)}
\newcommand{\odd}[1]{\mathsf{odd}(#1)}

\newcommand{\dencode}[3]{\left\llbracket#1\right\rrbracket_{#2}^{#3}}
\newcommand{\cencode}[2]{\llbracket#1\rrbracket_{#2}}
\newcommand{\tencode}[1]{\llbracket#1\rrbracket}

In this section we prove that well-typed \CoreCSLL processes fairly
terminate.  We do so by appealing to the alternative
characterization of fair termination given by
\cref{thm:fair-termination}. Using that characterization and using
the fact that typing is preserved by reductions
(\cref{thm:preservation}), it suffices to show that well-typed
\CoreCSLL processes \emph{weakly} terminate.
To do that, we encode a well-typed \CoreCSLL process $P$ into a
(valid) \muMALL proof and we use the cut elimination property of
\muMALL to argue that $P$ has a finite maximal run.

\paragraph*{Encoding of types}

The encoding of \CoreCSLL types into \muMALL formulas is the map
$\tencode\cdot$ defined by
\begin{equation}
    \label{eq:encoding-type}
    \tencode{\tclient\Type} = \tmu\X.(\One \choice (\tencode\Type \tfork X))
    \qquad
    \tencode{\tserver\Type} = \tnu\X.(\Bot \branch (\tencode\Type \tjoin X))
\end{equation}
and extended homomorphically to all the other type constructors,
which are in one-to-one correspondence with the connectives and
constants of \muMALL. Notice that the image of the encoding is a
relatively small subset of \muMALL formulas in which different fixed
point operators are never intertwined.
Also notice that the encoding of the coexponentials does not follow
exactly their expansion in \cref{eq:coexp}. Basically, we choose to
interpret $\tclient\Type$ as a \emph{list} of clients rather than as
a \emph{tree} of clients, following to the intuition that clients
are queued when connecting to a server. The interpretation of
$\tserver$ follows as a consequence, as we want it to be the dual of
the interpretation of $\tclient$.
Note that this interpretation of the coexponential modalities is the
same used by Qian et al.~\cite{QianKavvosBirkedal21}.

\paragraph*{Encoding of typing contexts}

The next step is the encoding of \CoreCSLL contexts into \muMALL sequents.
Recall that a \muMALL sequent is a set of occurrences and that an occurrence is
a pair $\Formula_\address$ made of a formula $\Formula$ and an address
$\address$. In order to associate addresses with formulas, we parametrize the
encoding of \CoreCSLL contexts with an injective map $\cmap$ from \CoreCSLL
channels to addresses, since channels in (the domain of a) \CoreCSLL context
uniquely identify the occurrence of a type (and thus of a formula).
We write $\extend\x\address$ for the singleton map that associates
$x$ with the address $\address$ and $\cmap_1,\cmap_2$ for the union
of $\cmap_1$ and $\cmap_2$ when they have disjoint domains and
codomains.
Now, the encoding of a \CoreCSLL context is set of formulas defined by
\[
    \cencode{x_1 : \Type_1, \dots, x_n : \Type_n}{\cmap,\extend{x_1}{\address_1},\dots,\extend{x_n}{\address_n}} \eqdef
    \tencode{\Type_1}_{\address_1}, \dots, \tencode{\Type_n}_{\address_n}
\]

\paragraph*{Encoding of typing derivations}

Just like for the encoding of \CoreCSLL contexts, also the encoding of typing
derivations is parametrized by a map $\cmap$ from \CoreCSLL channels to
addresses. In addition, we also have to take into account the possibility that
\emph{restricted channels} are introduced in a \CoreCSLL context, which happens
in the rule \refrule\CutRule of \cref{tab:typing-rules}. The formula occurrence
corresponding to the type of this newly introduced channel must have an address
that is disjoint from that of any other occurrence. To guarantee this
disjointness, we parametrize the encoding of \CoreCSLL derivations by an
\emph{infinite stream} $\amap$ of pairwise distinct atomic addresses. Formally,
$\amap$ is an injective function $\Nat \to \AddressSet$.  We write $a\amap$,
$\even\amap$ and $\odd\amap$ for the streams defined by
\[
    (a\amap)(0) \eqdef 0
    \qquad
    (a\amap)(n + 1) \eqdef \amap(n)
    \qquad
    \even\amap(n) \eqdef \amap(2n)
    \qquad
    \odd\amap(n) \eqdef \amap(2n+1)
\]
respectively. In words, $a\amap$ is the stream of atomic addresses
that starts with $a$ and continues as $\amap$ whereas $\even\amap$
and $\odd\amap$ are the sub-streams of $\amap$ consisting of
addresses with an even (respectively, odd) index.

\newcommand{\Judgment}{\mathcal{J}}

The encoding of a \CoreCSLL typing derivation is coinductively defined by a map
$\dencode\cdot\cmap\amap$ which we describe using the following notation. For
every typing rule in \cref{tab:typing-rules}
\[
    \inferrule[\rulename{rule}]{\Judgment_1\\\cdots\\\Judgment_n}{\Judgment}
    \text{\qquad we write \qquad}
    \dencode{\inferrule{\Judgment_1\\\cdots\\\Judgment_n}\Judgment}\cmap\amap =
\pi
\]
meaning that $\pi$ is the \muMALL derivation resulting from the encoding of the
\CoreCSLL derivation for the judgment $\Judgment$ in which the last rule is an
application of $\rulename{rule}$. Within $\pi$ there will be instances of the
$\dencode{\Judgment_i}{\cmap_i}{\amap_i}$ for suitable $\cmap_i$ and $\amap_i$
standing for the encodings of the \CoreCSLL sub-derivations for the judgments
$\Judgment_i$ that we find as premises of $\rulename{rule}$.

There is a close correspondence between many \CoreCSLL typing rules and \muMALL
proof rules so we only detail a few interesting cases of the encoding,
%
%
%
starting from the typing rules \refrule\ForkRule and \refrule\JoinRule. A
\muMALL typing derivation ending with an application of these rules is encoded
as follows:
\[
    \begin{array}{@{}r@{~}l@{}}
        \dencode{
            \inferrule{
                \wtp{P}{\ContextC, y : \TypeT}
                \\
                \wtp{Q}{\ContextD, x : \TypeS}            
            }{
                \wtp{\Fork\x\y{P}{Q}}{\ContextC, \ContextD, x : \TypeT \tfork \TypeS}
            }
        }{\cmap,\extend\x\address}\amap & =
        \inferrule{
            \dencode{\wtp{P}{\ContextC, y : \TypeT}}{\cmap,\extend\y{\address l}}{\even\amap}
            \\
            \dencode{\wtp{Q}{\ContextD, x : \TypeS}}{\cmap,\extend\x{\address r}}{\odd\amap}
        }{
            \vdash \cencode{\ContextC, \ContextD}\cmap, \tencode{\TypeT \tfork \TypeS}_\address
        }
        \,\refrule\ForkRule
        \\
        \dencode{
            \inferrule{
                \wtp{P}{\Context, y : \TypeT, x : \TypeS}
            }{
                \wtp{\Join\x\y.P}{\Context, x : \TypeT \tjoin \TypeS}
            }
        }{\cmap,\extend\x\address}\amap & =
        \inferrule{
            \dencode{\wtp{P}{\Context, y : \TypeT, x : \TypeS}}{\cmap,\extend\y{\address l},\extend\x{\address r}}\amap
        }{
            \vdash \cencode\Context\cmap, \tencode{\TypeT \tjoin \TypeS}_\address
        }
        \,\refrule\JoinRule
    \end{array}
\]

Notice that the types $\TypeT \tfork \TypeS$ and $\TypeT \tjoin \TypeS$
associated with $x$ in the conclusion of the rules are encoded into the
occurrences $\tencode{\TypeT \tfork \TypeS}_\address$ and $\tencode{\TypeT
\tjoin \TypeS}_\address$ where $\address$ is the address associated with $x$ in
$\cmap,\extend\x\address$. This address is suitably updated in the encoding of
the premises of the rules.
In the case of \refrule\ForkRule, the original stream $\amap$ of
atomic addresses is split into two disjoint streams in the encoding
of the premises to ensure that no atomic address is used twice.


Every application of \refrule\CallRule is simply erased in the encoding:
\[
    \dencode{
        \inferrule{
            \wtp{P}{\seqof{x : \Type}}
        }{
            \wtp{\Call\A{\seqof\x}}{\seqof{x : \Type}}
        }
    }\cmap\amap = \dencode{\wtp{P}{\seqof{x : \Type}}}\cmap\amap
\]

The validity of the \CoreCSLL typing derivation guarantees that there cannot be
an infinite chain of process invocations in a well-typed process. A proof of
this fact is given by \cref{lem:unfolded} in \cref{sec:extra-types}. For this
reason, the encoding of \CoreCSLL derivations is well defined despite the fact
that applications of \refrule\CallRule are erased.



Another case worth discussing is that of the rule \refrule\CutRule, which is
handled as follows:
\[
    \dencode{
        \inferrule{
            \wtp{P}{\ContextC, x : \Type}
            \\
            \wtp{Q}{\ContextD, x : \dual\Type}
        }{
            \wtp{\Cut\x{P}{Q}}{\ContextC, \ContextD}
        }
    }\cmap{a\amap} =
    \inferrule{
        \dencode{\wtp{P}{\ContextC, x : \Type}}{\cmap,\extend\x{a}}{\even\amap}
        \quad
        \dencode{\wtp{Q}{\ContextD, x : \dual\Type}}{\cmap,\extend\x{\dual{a}}}{\odd\amap}
    }{
        \vdash \cencode{\ContextC, \ContextD}\cmap
    }
    \,\refrule\CutRule
\]

The first address from the infinite stream $a\amap$, which is guaranteed to be
distinct from any other address used so far and that will be used in the rest of
the encoding, is associated with the newly introduced variable $x$. Similarly to
the case of \refrule\ForkRule, the tail of the stream is split in the encoding
of the two premises of \refrule\CutRule so as to preserve this guarantee.

We now consider the applications of \refrule\NilRule,
\refrule\ConsRule and \refrule\ServerRule which account for the most
relevant part of the encoding.  These rule applications are encoded
by considering the interpretation of the co-exponentials in terms of
least and greatest fixed points (\cref{eq:encoding-type}) and then
by applying the suitable \muMALL proof rules (\refrule\RecRule and
\refrule\CorecRule in particular).
We have
\[
    \dencode{
        \inferrule{~}{
            \wtp{\Nil\x}{x : \tclient\Type}
        }
    }{\cmap,\extend\x\address}\amap =
    \inferrule{
        \inferrule{
            \inferrule{~}{
                \vdash \One_{\address il}
            }
            \rlap{\refrule\CloseRule}
        }{
            \vdash \tencode{\One \choice (\Type \tfork \tclient\Type)}_{\address i}
        }
        \rlap{\refrule\SelectRule}
    }{
        \vdash \tencode{\tclient\Type}_\address
    }
    \refrule\RecRule
\]
for the applications of \refrule\NilRule and
\begin{multline*}
    \dencode{
        \inferrule{
            \wtp{P}{\ContextC, y : \Type}
            \\
            \wtp{Q}{\ContextD, x : \tclient\Type}
        }{
            \wtp{\Cons\x\y{P}Q}{\ContextC, \ContextD, x : \tclient\Type}
        }
    }{\cmap,\extend\x\address}\amap = \\
    \begin{prooftree}
        \[
            \[
                \dencode{
                    \wtp{P}{\ContextC, y : \Type}
                }{\cmap,\extend\y{\address irl}}{\even\amap}
                \qquad
                \dencode{
                    \wtp{Q}{\ContextD, x : \tclient\Type}
                }{\cmap,\extend\x{\address irr}}{\odd\amap}
                \justifies
                \vdash
                    \cencode{\ContextC, \ContextD}\cmap,
                    \tencode{\Type \tfork \tclient\Type}_{\address ir}
                \using\refrule\ForkRule
            \]
            \justifies
            \vdash
                \cencode{\ContextC, \ContextD}\cmap,
                \tencode{\One \choice (\Type \tfork \tclient\Type)}_{\address i}
            \using\refrule\SelectRule
        \]
        \justifies
        \vdash \cencode{\ContextC, \ContextD}\cmap, \tencode{\tclient\Type}_\address
        \using\refrule\RecRule
    \end{prooftree}
\end{multline*}
for the applications of \refrule\ConsRule. Finally, the applications
of \refrule\ServerRule are encoded thus:
\begin{multline*}
    \dencode{
        \inferrule{
            \wtp{P}{\Context, x : \tserver\Type, y : \Type}
            \\
            \wtp{Q}\Context
        }{
            \wtp{\Server\x\y{P}{Q}}{\Context, x : \tserver\Type}
        }
    }{\cmap,\extend\x\address}\amap = \\
    \begin{prooftree}
        \[
            \[
                \dencode{
                    \wtp{Q}\Context
                }\cmap\amap
                \justifies
                \vdash
                    \cencode\Context\cmap, \Bot_{\address il}
                \using\refrule\WaitRule
            \]
            \[
                \dencode{
                    \wtp{P}{\Context, x : \tserver\Type, y : \Type}
                }{\cmap,\extend\x{\address irr},\extend\y{\address irl}}\amap
                \justifies
                \vdash
                    \cencode\Context\cmap,
                    \tencode{\Type \tjoin \tserver\Type}_{\address ir}
                \using\refrule\JoinRule
            \]
            \justifies
            \vdash
                \cencode\Context\cmap,
                \tencode{\Bot \branch (\Type \tjoin \tserver\Type)}_{\address i}
            \using\refrule\CaseRule
        \]
        \justifies
        \vdash \cencode\Context\cmap, \tencode{\tserver\Type}_\address
        \using\refrule\CorecRule
    \end{prooftree}
\end{multline*}

Note that in this last case it is not necessary to split the stream $\amap$
since the $P$ and $Q$ branches of the server are mutually exclusive (the
reduction rules \refrule{r-connect} and \refrule{r-done} pick one or the other
branch, but not both). A similar thing happens in the encoding of the
applications of \refrule\CaseRule, not shown here.

\paragraph*{Validity of encoded typing derivations}

Now that we have shown how every \CoreCSLL typing derivation is encoded into a
\muMALL derivation, we argue that the encoding preserves validity. More
specifically, a valid \CoreCSLL typing derivation (\cref{def:valid-typing}) is
encoded into a valid \muMALL derivation (\cref{def:valid-proof}).
To see that this is the case, first observe that there is a one-to-one
correspondence between the infinite branches in the two derivations. From
\cref{def:valid-typing} we know that every infinite branch in a \CoreCSLL
derivation contains infinitely many applications of \refrule\ServerRule
concerning the same shared channel $x$ having type $\tserver\Type$ for some
$\Type$. In the encoded derivation, this translates to the existence of a
formula $\tencode{\tserver\Type}$ that occurs infinitely often in the sequents
making up this infinite branch. Now, suppose that the first occurrence of this
formula is associated with some address $\address$. From the encoding of
\refrule\ServerRule we can then build the thread
\[
    t \eqdef
    (
        \tencode{\tserver\Type}_\address,
        \tencode{\Bot \branch (\Type \tjoin \tserver\Type)}_{\address i},
        \tencode{\Type \tjoin \tserver\Type}_{\address ir},
        \tencode{\tserver\Type}_{\address irr},
        \dots
    )
\]
which is \emph{infinite}. Also note that $\InfOften{t} = \set{
\tencode{\tserver\Type}, \tencode{\Bot \branch (\Type \tjoin \tserver\Type)},
\tencode{\Type \tjoin \tserver\Type} }$, that $\minf\InfOften{t} =
\tencode{\tserver\Type}$, and that $\tencode{\tserver\Type}$ is a $\nu$-formula
by \cref{eq:encoding-type}.
In conclusion, $t$ is a $\tnu$-thread (\cref{def:nu-thread}) as required by the
validity condition for \muMALL pre-proofs (\cref{def:valid-proof}).

\paragraph*{Soundness of the type system}

Now that we know how to obtain a \muMALL proof from a well-typed \CoreCSLL
process we observe that each reduction rule of \DetCSLL corresponds to \emph{one
or more} principal reductions in a \muMALL proof~\cite[Figure 3.2]{Doumane17}.
In particular, the reductions \refrule{r-close}, \refrule{r-comm} and
\refrule{r-case} correspond to \emph{exactly one} principal reduction in \muMALL
(for $\One/\Bot$, ${\tfork}/{\tjoin}$ and ${\choice}/{\branch}$ respectively),
whereas \refrule{r-done} and \refrule{t-client} correspond to \emph{three}
subsequent principal reductions in \muMALL. For example, \refrule{r-connect}
corresponds to the principal reduction $\tmu$/$\tnu$ followed by
${\choice}/{\branch}$ followed by ${\tfork}/{\tjoin}$.
Using this correspondence between \DetCSLL and \muMALL, we can prove
that every \CoreCSLL process that is well typed in a context of the
form $x : \One$ is weakly terminating.

\begin{theorem}
  \label{thm:weak-termination}
  If $\wtp{P}{x : \One}$ then $P$ is weakly terminating.
\end{theorem}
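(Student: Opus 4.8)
The plan is to reduce weak termination to the cut elimination property of \muMALL through the encoding developed above. Since typing is preserved by reductions (\cref{thm:preservation}) and ${\dred} \subseteq {\red}$, it suffices to exhibit a \emph{single} finite \DetCSLL reduction sequence starting from $P$ and ending in a $\dred$-normal form. That this yields a maximal \CoreCSLL run will follow from deadlock freedom (\cref{thm:df}): a process typed by $x : \One$ that admits no $\dred$-step is necessarily pre-congruent to $\Close\x$, and $\Close\x \nred$, so the run is maximal in \CoreCSLL as well.

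First I would encode the valid \CoreCSLL typing derivation of $\wtp{P}{x : \One}$ into a \muMALL derivation $\pi$ of $\vdash \One_\address$ for a suitable address $\address$, via $\dencode\cdot\cmap\amap$. By the validity argument established above, $\pi$ is a valid \muMALL proof, so cut elimination applies to it. The key observation is that the unique cut-free proof of $\vdash \One_\address$ is the single application of \refrule\CloseRule, which has no premises: no other rule has a conclusion of this shape. Because \muMALL cut elimination proceeds bottom-up by reducing outermost cuts first and is \emph{productive}, after finitely many reduction steps the bottommost rule of the reduct is a logical rule; for the sequent $\vdash \One_\address$ that rule can only be \refrule\CloseRule, and since it has no premises the whole proof is already cut-free at that point. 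Consequently the number of principal reductions performed while normalizing $\pi$ is \emph{finite}.

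Next I would exploit the correspondence between \DetCSLL reductions and principal reductions recalled above, whereby each step $P \dred P'$ is mirrored by one or more principal reductions turning the proof encoding $P$ into the proof encoding $P'$, the structural pre-congruence of \CoreCSLL accounting for the commutative rearrangements of cuts (this is exactly the alignment of \CoreCSLL's conventional outermost semantics with the outermost-first strategy of \muMALL). Starting from $P_0 = P$ I would build a sequence $P_0 \dred P_1 \dred \cdots$ by repeatedly invoking \cref{thm:df}: as long as the current process is not pre-congruent to $\Close\x$ it admits a $\dred$-step whose encoding performs one or more of the principal reductions dictated by cut elimination. This sequence is therefore simulated by the (finite) cut-elimination reduction of $\pi$, so it cannot be infinite; it terminates in some $P_n$ with $P_n \ndred$, and by \cref{thm:df} we get $P_n \pcong \Close\x$. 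Reading $P_0 \red P_1 \red \cdots \red P_n$ through ${\dred} \subseteq {\red}$ and noting $P_n \pcong \Close\x \nred$ exhibits a finite maximal run of $P$, i.e.\ weak termination.

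The main obstacle I anticipate is making this simulation precise enough to serve as a terminating measure. Concretely, one must check that the principal reductions induced by a \DetCSLL step are genuinely among those fired by the outermost-first cut-elimination procedure on the encoded proof — that $\dencode\cdot\cmap\amap$ simulates $\dred$ compatibly with the chosen reduction strategy — and that the interleaved structural pre-congruence steps neither create nor destroy principal cuts in a way that would break the finiteness bound. Establishing this, together with the productivity of \muMALL cut elimination that guarantees the bottommost logical rule of $\vdash \One_\address$ is reached after finitely many steps, is where the real work lies; once it is in place, well-foundedness of the number of remaining principal reductions immediately bounds the length of the \DetCSLL run.
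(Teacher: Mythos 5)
Your proposal is correct and follows essentially the same route as the paper's proof: encode the valid typing derivation into a valid \muMALL proof, use the correspondence between \DetCSLL reductions and principal reductions together with cut elimination to rule out an infinite \DetCSLL sequence, then combine \cref{thm:preservation} and \cref{thm:df} to conclude $P \wred\, \pcong \Close\x \nred$. The only divergence is how finiteness of principal reductions is obtained: you argue via productivity of the outermost-first strategy and the shape of the unique cut-free proof of $\vdash \One$, which forces you to flag strategy-compatibility of the simulation as a remaining obligation, whereas the paper simply invokes Doumane's Proposition~3.5, which excludes \emph{any} infinite sequence of principal reductions and thus renders that concern moot.
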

\begin{proof}
    Let $a\amap$ be an infinite stream of pairwise distinct atomic addresses.
    Every reduction of $P$ can be mimicked by one or more principal reductions
    in the \muMALL proof $\dencode{\wtp{P}{x : \One}}{\extend\x{a}}\amap$.
    We know that \muMALL enjoys cut elimination \cite{Doumane17}. In particular,
    there cannot be an infinite sequence of principal reductions in a \muMALL
    proof \cite[Proposition 3.5]{Doumane17}.
    It follows that there is no infinite sequence of reductions starting from
    $P$ (using the \DetCSLL semantics), that is $P \dreds Q \ndred$ for some
    $Q$.
    From \cref{thm:preservation} we deduce $\wtp{Q}{x : \One}$ and from
    \cref{thm:df} we deduce $Q \pcong \Close\x$.
    We conclude $P \wred\pcong \Close\x \nred$. In other words, $P$ is weakly
    terminating.
\end{proof}

\begin{corollary}
  If $\wtp{P}{x : \One}$ then $P$ is fairly terminating.
\end{corollary}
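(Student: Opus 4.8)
The plan is to reduce the statement to the already-established results by way of the alternative characterization of fair termination in \cref{thm:fair-termination}. That characterization says that $P$ is fairly terminating precisely when every process reachable from $P$ is weakly terminating, so it suffices to prove that $P \wred Q$ implies that $Q$ is weakly terminating.

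First I would fix an arbitrary $Q$ with $P \wred Q$ and show that $Q$ remains well typed in the same context, that is $\wtp{Q}{x : \One}$. Since $\wred$ is the reflexive, transitive closure of $\red$ (with structural pre-congruence folded in through \refrule{r-struct}), I would argue by induction on the length of the reduction sequence witnessing $P \wred Q$, appealing to \cref{thm:preservation} at each step. The base case is the reflexive one, $P = Q$, where the hypothesis $\wtp{P}{x : \One}$ applies directly. The fact that \cref{thm:preservation} is stated uniformly for both ${\pcong}$ and ${\red}$ is exactly what makes the inductive step go through, since a single $\red$ step may internally invoke structural pre-congruence.

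Next I would apply \cref{thm:weak-termination} to the judgment $\wtp{Q}{x : \One}$ obtained above, which yields that $Q$ is weakly terminating. As $Q$ was arbitrary among the processes reachable from $P$, the hypothesis of the ($\Leftarrow$) direction of \cref{thm:fair-termination} is met, and we conclude that $P$ is fairly terminating.

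This corollary carries no real difficulty of its own: it is a direct composition of \cref{thm:preservation}, \cref{thm:weak-termination}, and \cref{thm:fair-termination}. The genuine content---and hence the only place where I would expect an obstacle---lies entirely inside \cref{thm:weak-termination}, whose proof routes weak termination of well-typed processes through the cut-elimination property of \muMALL; here that work is simply repackaged, via the fairness characterization, into the fair-termination statement.
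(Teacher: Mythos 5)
Your proof is correct and follows exactly the argument the paper intends: the paper's own proof simply cites \cref{thm:fair-termination,thm:weak-termination} as a ``straightforward consequence,'' and the intended unpacking is precisely your chain---type preservation (\cref{thm:preservation}) along the reduction sequence $P \wred Q$, then \cref{thm:weak-termination} applied to $Q$, then the ($\Leftarrow$) direction of \cref{thm:fair-termination}. Your explicit mention of \cref{thm:preservation} (which the paper invokes in the surrounding prose of \cref{sec:termination} rather than in the corollary's proof) is a harmless and arguably welcome bit of added rigor.
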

\begin{proof}
  Straightforward consequence of
  \cref{thm:fair-termination,thm:weak-termination}.
\end{proof}


\newcommand{\True}{\mktag{true}}
\newcommand{\False}{\mktag{false}}
\newcommand{\If}[4]{\mkkeyword{if}~#1=#2~\mkkeyword{then}~#3~\mkkeyword{else}~#4}
\newcommand{\Drop}[1]{\mkkeyword{drop}~#1}
\newcommand{\CASClient}[2]{\textsf{Client}_{#1,#2}}
\newcommand{\CASServer}[1]{\textsf{CAS}_{#1}}
\newcommand{\CASClients}{\textsf{Clients}}

\section{Example: a Compare-and-Swap register}
\label{sec:example}

In this section we illustrate a more complex scenario of client-server
interaction that hightlights not only the fact that the server handles
connections sequentially in an unspecified order but also the fact that each
connection may change the server's internal state and affect other connections.
More specifically, we show a modeling of the Compare-and-Swap (CAS) register of
Qian et al.~\cite{QianKavvosBirkedal21} in \CoreCSLL. A CAS register holds a
boolean value $\True$ or $\False$ and is represented as a server that accepts
connections from clients. Each client sends two boolean values to the server, an
\emph{expected value} and a \emph{desired value}. If the expected value matches
the content of the register, then the register is overwritten with the desired
value. Otherwise, the register remains unchanged. We model boolean values as
choices made in some session $y$. For instance, we can model the sending of
$\True$ on $y$ by the selection $\Left\y$ and the sending of $\False$ on $y$ by
the selection $\Right\y$. In fact, in this section we write $\True$ and $\False$
as aliases for the labels $\LeftTag$ and $\RightTag$, respectively.

Below are two definitions for clients that differ for the expected and desired
values they send to the CAS register:
\[
    \Let{\CASClient\True\False}{y}{\Select\y\True.\Select\y\False.\Close\y}
    \qquad
    \Let{\CASClient\False\True}{y}{\Select\y\False.\Select\y\True.\Close\y}
\]

It is easy to see that both definitions are well typed. In particular, we can
derive $\wtp{\Call{\CASClient{b}{c}}{y}}{y : (\One \choice \One) \choice (\One
\choice \One)}$ for every $b,c\in\set{\True,\False}$ with two applications of
\refrule\SelectRule and one application of \refrule\CloseRule.
We combine two clients in a single pool as by the following definition
\[
    \Let\CASClients{x}{
        \Cons\x\y{\Call{\CASClient\True\False}\y}
        \Cons\x\y{\Call{\CASClient\False\True}\y}
        \Nil\x
    }
\]
for which we derive $\wtp{\Call\CASClients\x}{x :
\tclient((\One\choice\One)\choice(\One\choice\One))}$ using \refrule\ConsRule
and \refrule\NilRule.

For the CAS server we provide two definitions $\CASServer\True$ and
$\CASServer\False$ corresponding to the states in which the register holds the
value $\True$ and $\False$, respectively.
\begin{align*}
    \Let{\CASServer\True}{x,z&}{
        \begin{array}[t]{@{}l@{}l@{}}
        \Server\x\y{&
            \begin{array}[t]{@{}l@{}l@{}}
            \Case\y{&
                \Case\y{
                    \Wait\y.\Call{\CASServer\True}{x,z}
                }{
                    \Wait\y.\Call{\CASServer\False}{x,z}
                }
            }{\\&
                \Case\y{
                    \Wait\y.\Call{\CASServer\True}{x,z}
                }{
                    \Wait\y.\Call{\CASServer\True}{x,z}
                }
            }
        }{
            \end{array}
            \\&
            \Select\z\True.\Close\z
        }
        \end{array}
    }
    \\
    \Let{\CASServer\False}{x,z&}{
        \begin{array}[t]{@{}l@{}l@{}}
        \Server\x\y{&
            \begin{array}[t]{@{}l@{}l@{}}
            \Case\y{&
                \Case\y{
                    \Wait\y.\Call{\CASServer\False}{x,z}
                }{
                    \Wait\y.\Call{\CASServer\False}{x,z}
                }
            }{\\&
                \Case\y{
                    \Wait\y.\Call{\CASServer\True}{x,z}
                }{
                    \Wait\y.\Call{\CASServer\False}{x,z}
                }
            }
        }{
            \end{array}
            \\&
            \Select\z\False.\Close\z
        }
        \end{array}
    }
\end{align*}

The server in state $b\in\set{\True,\False}$ waits for connections on the shared
channel $x$. If there is no client, the server sends $b$ on $z$ and terminates.
If a client connects, then a session $y$ is established. At this stage the
server performs two input operations to receive the expected and desired values
from the client. If the expected value does \emph{not} match $b$, then the
desired value is ignored and the server recursively invokes itself in the same
state $b$. If the expected value matches $b$, then the server recursively
invokes itself in a state that matches the client's desired value.

It is not difficult to obtain derivations for the judgments
$\wtp{\Call{\CASServer{b}}{x,z}}{x :
\tserver((\Bot\branch\Bot)\branch(\Bot\branch\Bot)), z : \One \choice \One}$ for
every $b\in\set{\True,\False}$. These derivations are valid since every infinite
branch in them goes through an application of \refrule\ServerRule concerning the
channel $x$.
In conclusion, the CAS server is well typed and so is the composition
$\Cut\x{\Call\CASClients\x}{\Call{\CASServer\True}{x,z}}$.

Note that the process $\Cut\x{\Call\CASClients\x}{\Call{\CASServer\True}{x,z}}$
is not deterministic since it may reduce to either $\Select\z\True.\Close\z$ or
$\Select\z\False.\Close\z$ depending on the order in which clients connect.
Indeed, if $\CASClient\True\False$ connects first, then the state of the
register changes from $\True$ to $\False$ and then the connection with the
second client changes it back from $\False$ to $\True$. If, on the other hand,
$\CASClient\False\True$ connects first (because \refrule{s-pool-comm} is used),
then the initial state of the register does not change and then it is changed
from $\True$ to $\False$ when the client $\CASClient\True\False$ finally
connects.


\section{Concluding Remarks}
\label{sec:conclusion}

\CSLL~\cite{QianKavvosBirkedal21} is a non-deterministic session calculus based
on linear logic in which servers handle multiple client requests sequentially.
In this work we have targeted the problem of proving the termination of
well-typed \CSLL processes. To this aim, we have introduced \CoreCSLL, a variant
of \CSLL closely related to
\muMALL~\cite{BaeldeDoumaneSaurin16,Doumane17,BaeldeEtAl22}, the infinitary
proof system for multiplicative additive linear logic with fixed points. We have
shown that well-typed \CoreCSLL processes are fairly terminating by encoding
\CoreCSLL typing derivations into \muMALL proofs and using the cut elimination
property of \muMALL. Although fair termination is weaker than termination, it is
strong enough to imply livelock freedom, which was one of the motivations for
proving termination in the original \CSLL work~\cite{QianKavvosBirkedal21}. In
our work, fair termination is termination under the fairness assumption that
termination is not avoided forever (\cref{thm:fair-termination}). However,
inspection of our proof (\cref{sec:termination}) reveals that the fairness
assumption can be substantially weakened: the fair termination in \CoreCSLL is
reduced to the termination in \DetCSLL, meaning that fair termination in
\CoreCSLL is guaranteed if client requests are handled in order.

\CoreCSLL differs from the original \CSLL in a few ways.
In the interest of simplicity, we have chosen to omit constructs for
modeling (pools of) sequential clients and replicated servers which
are meant to be typed using the traditional exponential
modalities. These features are orthogonal to the ones we are
interested in and we think that they can be accommodated without
substantial challenges following the same technical development
illustrated in the present paper. In fact, the general support to
fixed points in \muMALL allows for this and other extensions, such
as (co)recursive session
types~\cite{LindleyMorris16,CicconePadovani22c}.
Another difference is that \CoreCSLL adopts a reduction semantics that is
completely ordinary for a process calculus. In particular, reductions are
\emph{not} allowed under prefixes, restrictions can\emph{not} be moved beyond
prefixes and (unrelated) prefixes can\emph{not} be swapped. Nonetheless, we are
able to relate the reduction semantics of \CoreCSLL with the cut reduction
strategy of \muMALL since \muMALL proofs, which can be infinite, are reduced
bottom-up. For this reason, we find that \muMALL provides a natural logical
foundation for session calculi based on linear logic.

Just like \CSLL, also \CoreCSLL is related to
\SILLS~\cite{BalzerPfenning17,BalzerToninhoPfenning19} and
\HCPND~\cite{KokkeMorrisWadler20}, two session calculi based on linear logic
that allow for races and non-determinism.
In \SILLS, sessions can be \emph{shared} among more than two communicating
processes. Access to a shared session is regulated by means of explicit
acquire/release actions that manifest themselves as special modalities in
session types. The flexibility gained by session sharing may compromise deadlock
freedom, which can be recovered by means of additional type
structure~\cite{BalzerToninhoPfenning19}.
\HCPND uses bounded exponentials~\cite{GirardScedrovScott92} to implement
client/server interactions in which the amount of channel sharing is known (and
bounded) in advance. Neither \CSLL nor \CoreCSLL require such bounds. For
example, the forwarder process $\Call{\ExpandLink{\tserver\Type}}{x,y}$ in
\cref{ex:forwarder} would be ill typed in \HCPND since the number of clients
that may be willing to connect on $x$ is not known \emph{a priori}.


\appendix

\section{Supplement to Section~\ref{sec:types}}
\label{sec:extra-types}

In the proofs of \cref{lem:pcong,lem:sr} below we only focus on the
\emph{derivability} of the typing judgment a structural
pre-congruence or a reduction, without worrying about the
\emph{validity} of the derivation. It is easy to see that validity
is preserved since both structural pre-congruence and reductions
either change a \emph{finite} region of the typing derivation or
\emph{remove} an entire sub-tree of the derivation (as in the case
of \refrule{r-case}). Either way, the fact that every infinite
branch in the residual derivation satisfies the validity conditions
(\cref{def:valid-typing}) follows from the hypothesis that the
initial typing derivation is valid.

\begin{lemma}
    \label{lem:pcong}
    If $\wtp{P}\Context$ and $P \pcong Q$ then $\wtp{Q}\Context$.
\end{lemma}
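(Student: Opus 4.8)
The plan is to proceed by induction on the derivation of $P \pcong Q$. Since $\pcong$ is the preorder and congruence generated by the axioms in \cref{tab:semantics}, and since the target property ``$\wtp{P}\Context$ implies $\wtp{Q}\Context$'' holds trivially for reflexivity, is closed under transitivity (by composing the two implications), and is closed under each term constructor (by applying the induction hypothesis to the relevant sub-derivation), the proof reduces to checking each generating axiom in isolation. For every axiom I would invert the typing derivation of the left-hand side to expose the shape forced by the typing rules of \cref{tab:typing-rules}, and then reassemble a derivation for the right-hand side from the same premises.

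The easy cases rest only on structural properties of contexts and on duality. For \refrule{s-par-comm} the two derivations differ merely by swapping the premises of \refrule\CutRule and reading the cut formula from the dual side, which is sound because context union $\ContextC,\ContextD$ is commutative and $\dual{\dual\Type} = \Type$. For \refrule{s-pool-comm} both sides are typed by two nested applications of \refrule\ConsRule, and the derivations coincide after commuting and reassociating the union that splits the context among the two clients, with the shared channel $x : \tclient\Type$ threaded through both applications in either order. Finally, \refrule{s-call} is immediate: the premise of \refrule\CallRule is exactly $\wtp{P}{\seqof{x:\Type}}$, so inverting the rule for $\Call\A{\seqof\x}$ yields a derivation for the body $P$, and conversely.

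The real work lies in the (mixed) associativity axioms \refrule{s-par-assoc}, \refrule{s-pool-par} and \refrule{s-par-pool}, where \refrule\CutRule and \refrule\ConsRule split the ambient context and a reassociation must regroup those splits. Consider \refrule{s-par-assoc}: typing $\Cut\x{P}{\Cut\y{Q}{R}}$ gives $P$ in a context containing $x : \Type$, while the inner cut splits the rest of the context between $Q$ and $R$ along $y$. Here the side condition $x \in \fn{Q} \setminus \fn{R}$ is precisely what lets me conclude that the occurrence $x : \dual\Type$ lies in the part feeding $Q$ and not $R$, so that the inner cut on $x$ of the right-hand side is well formed; dually, $y \notin \fn{P}$ guarantees that $y$ does not already occur in the context of $P$, so the outer cut on $y$ introduces no capture. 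With this placement information the two splittings of $\ContextC,\ContextD$ can be regrouped and the derivation for $\Cut\y{\Cut\x{P}{Q}}{R}$ assembled. The arguments for \refrule{s-pool-par} and \refrule{s-par-pool} are analogous, one \refrule\CutRule being replaced by an application of \refrule\ConsRule and the placement of $x$, $z$ and the bound name $y$ being read off from their respective side conditions.

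The main obstacle is thus not conceptual but the bookkeeping in these associativity cases: correctly tracking which sub-context each channel inhabits across nested splittings, and verifying that the stated side conditions are exactly what is needed to justify the regrouping without name capture or dangling channels. As already observed at the start of \cref{sec:extra-types}, I need only exhibit a derivation of $\wtp{Q}\Context$ (\emph{derivability}), and not argue its \emph{validity}: each axiom rewrites the derivation within a finite region, so validity of the residual follows from validity of the original.
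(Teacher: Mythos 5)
Your proposal follows essentially the same route as the paper's proof: induction on the derivation of $P \pcong Q$, case analysis on the last rule applied, inversion of the typing derivation of the left-hand side and reassembly for the right-hand side, with the side conditions of the associativity axioms used exactly as the paper uses them to locate the cut/pool channel in the correct sub-context. The paper's proof is in fact terser than yours—it works out only the \refrule{s-par-pool} case and declares the rest standard—so your case coverage matches or exceeds it, and your closing remark about derivability versus validity mirrors the paper's own preamble to the appendix.
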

\begin{proof}
  By induction on the derivation of $P \pcong Q$ and by cases on the last rule
  applied. The proof is standard, we only discuss \refrule{s-par-pool} for
  illustration purposes.
  In this case $P = \Cut\z{\Cons\x\y{P_1}{P_2}}{P_3} \pcong
  \Cons\x\y{P_1}{\Cut\z{P_2}{P_3}} = Q$ where $z \not\in
  \fn{\Client\x\y.P_1}$.
  From \refrule\CutRule we deduce $\wtp{\Cons\x\y{P_1}P_2}{\Context_{12}, z :
  \TypeT}$ and $\wtp{P_3}{\Context_{3}, z : \dual\TypeT}$ where $\Context =
  \Context_{12}, \Context_3$.
  From \refrule\ConsRule and $z\not\in\fn{\Client\x\y.P_1}$ we deduce
  $\wtp{P_1}{\Context_1, y : \TypeS}$ and $\wtp{P_2}{\Context_2, x :
  \tclient\TypeS, z : \TypeT}$ where $\Context_{12} = \Context_1, \Context_2, x
  : \tclient\TypeS$.
  We derive $\wtp{\Cut\z{P_2}{P_3}}{\Context_2, \Context_3, x : \tclient\TypeS}$
  with one application of \refrule\CutRule.
  We conclude $\wtp{Q}\Context$ with one application of \refrule\ConsRule.
\end{proof}

\begin{lemma}
    \label{lem:sr}
    If $\wtp{P}\Context$ and $P \red Q$ then $\wtp{Q}\Context$.
\end{lemma}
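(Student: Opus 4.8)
The plan is to proceed by induction on the derivation of $P \red Q$ and by cases on the last rule applied, following the same pattern as the proof of \cref{lem:pcong}. The reduction axioms \refrule{r-close}, \refrule{r-comm}, \refrule{r-case}, \refrule{r-done} and \refrule{r-connect} are the base cases: for each I would invert the typing derivation of $P$ — which is unproblematic despite the coinductive reading of the rules, since the shape of $P$ determines its last typing rule uniquely — and then rebuild a derivation of $\wtp{Q}\Context$ from the typing derivations of the subprocesses. The congruence rules \refrule{r-par} and \refrule{r-pool} follow immediately from the induction hypothesis applied to the reducing subprocess, reapplying \refrule\CutRule and \refrule\ConsRule respectively. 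Finally, \refrule{r-struct} is dispatched by sandwiching the induction hypothesis between two uses of \cref{lem:pcong}: from $\wtp{P}\Context$ and $P \pcong P'$ we get $\wtp{P'}\Context$, the induction hypothesis on $P' \red Q'$ gives $\wtp{Q'}\Context$, and $Q' \pcong Q$ yields $\wtp{Q}\Context$.

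For the multiplicative axioms the reconstruction is routine. For instance, in case \refrule{r-comm} with $P = \Cut\x{\Fork\x\y{P_1}{P_2}}{\Join\x\y.R}$, inverting \refrule\CutRule, \refrule\ForkRule and \refrule\JoinRule yields $\wtp{P_1}{\ContextC_1, y : \TypeT}$, $\wtp{P_2}{\ContextC_2, x : \TypeS}$ and $\wtp{R}{\ContextD, y : \dual\TypeT, x : \dual\TypeS}$, using $\dual{(\TypeT \tfork \TypeS)} = \dual\TypeT \tjoin \dual\TypeS$. Two applications of \refrule\CutRule — first cutting $x$ between $P_2$ and $R$, then cutting $y$ between $P_1$ and the result — produce $\wtp{\Cut\y{P_1}{\Cut\x{P_2}{R}}}\Context$ with $\Context = \ContextC_1, \ContextC_2, \ContextD$, as required. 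Cases \refrule{r-close} and \refrule{r-case} are even simpler, the former collapsing the context via inversion of \refrule\CloseRule and \refrule\WaitRule.

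The cases I expect to be the most delicate, though not genuinely hard, are the coexponential reductions \refrule{r-connect} and \refrule{r-done}, since these are where the new connectives and the duality $\dual{(\tclient\TypeT)} = \tserver\dual\TypeT$ come into play. In \refrule{r-connect}, inverting \refrule\CutRule, \refrule\ConsRule and \refrule\ServerRule on $P = \Cut\x{\Cons\x\y{P_1}{Q_1}}{\Server\x\y{R_1}{R_2}}$ gives $\wtp{P_1}{\ContextC_1, y : \TypeT}$, $\wtp{Q_1}{\ContextC_2, x : \tclient\TypeT}$ and $\wtp{R_1}{\ContextD, x : \tserver\dual\TypeT, y : \dual\TypeT}$; cutting $x$ between $Q_1$ and $R_1$ (legal because $\tclient\TypeT$ and $\tserver\dual\TypeT$ are dual) and then cutting $y$ reconstructs $\wtp{\Cut\y{P_1}{\Cut\x{Q_1}{R_1}}}\Context$. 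In \refrule{r-done}, inverting \refrule\NilRule forces the client type to be some $\tclient\TypeT$ and inverting \refrule\ServerRule delivers exactly $\wtp{Q}\Context$ for the surviving continuation. Throughout I would rely, as noted in the preamble to this appendix, on the observation that validity of the residual derivation is automatic, since each reduction only rewrites a finite region of the typing derivation or discards an entire subtree.
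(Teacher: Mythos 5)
Your proposal is correct and follows essentially the same route as the paper's own proof: induction on the derivation of $P \red Q$, inversion of the typing rules in each base case (with the same two-cut reconstructions for \refrule{r-comm} and \refrule{r-connect}), the induction hypothesis plus re-application of \refrule\CutRule{} or \refrule\ConsRule{} for \refrule{r-par} and \refrule{r-pool}, two uses of \cref{lem:pcong} for \refrule{r-struct}, and the same observation that validity of the residual derivation is automatic. No gaps to report.
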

\begin{proof}
  By induction on the derivation of $P \red Q$ and by cases on the
  last rule applied.

  \begin{itemize}
  \item \refrule{r-close} Then
    $P = \Cut\x{\Close\x}{\Wait\x.Q} \red Q$.  From
    \refrule\CutRule, \refrule\CloseRule and \refrule\WaitRule we
    deduce $\wtp{\Close\x}{x:\One}$ and
    $\wtp{\Wait\x.Q}{\Context,x:\Bot}$.
    From \refrule\WaitRule we conclude $\wtp{Q}\Context$.

  \item \refrule{r-comm} Then
    $P = \Cut\x{\Fork\x\y{P_1}{P_2}}{\Join\x\y.P_3} \red
    \Cut\y{P_1}{\Cut\x{P_2}{P_3}} = Q$.
    From \refrule\CutRule we deduce
    $\wtp{\Fork\x\y{P_1}{P_2}}{\Context_{12}, x : \TypeT \tfork
      \TypeS}$ and
    $\wtp{\Join\x\y.P_3}{\Context_3, x : \dual\TypeT \tjoin
      \dual\TypeS}$ where $\Context = \Context_{12}, \Context_3$.
    From \refrule\ForkRule we deduce
    $\wtp{P_1}{\Context_1, y : \TypeT}$ and
    $\wtp{P_2}{\Context_2, x : \TypeS}$ where
    $\Context_{12} = \Context_1, \Context_2$.
    From \refrule\JoinRule we deduce
    $\wtp{P_3}{\Context_3, y : \dual\TypeT, x : \dual\TypeS}$.
    We derive
    $\wtp{\Cut\x{P_2}{P_3}}{\Context_2,\Context_3,y:\dual\TypeT}$
    with one application of \refrule\CutRule.
    We conclude $\wtp{\Cut\y{P_1}{\Cut\x{P_2}{P_3}}}\Context$ with
    one application of \refrule\CutRule.

  \item \refrule{r-case} Then
    $P = \Cut\x{\Select\x{\InTag_i}.R}{\Case\x{Q_1}{Q_2}} \red
    \Cut\x{R}{Q_i} = Q$.
    From \refrule\CutRule, \refrule\SelectRule and \refrule\CaseRule
    we deduce
    $\wtp{\Select\x{\InTag_i}.R}{\Context_1,x:\Type_1
      \choice\Type_2}$ and
    $\wtp{\Case\x{Q_1}{Q_2}}{\Context_2,x:\dual{\Type_1}\branch\dual{\Type_2}}$
    where $\Context = \Context_1,\Context_2$.
    From \refrule\SelectRule we deduce
    $\wtp{R}{\Context_1,x:\Type_i}$.
    From \refrule\CaseRule we deduce
    $\wtp{Q_i}{\Context_2,x:\dual{\Type_i}}$ for $i=1,2$.
    We conclude $\wtp{\Cut\x{R}{Q_i}}{\Context}$ with one
    application of \refrule\CutRule.
    
    \item \refrule{r-connect} Then $P =
    \Cut\x{\Cons\x\y{P_1}P_2}{\Server\x\y{Q_1}{Q_2}} \red
    \Cut\y{P_1}{\Cut\x{P_2}{Q_1}} = Q$.
    From \refrule\CutRule, \refrule\ConsRule and \refrule\ServerRule we deduce
    $\wtp{\Cons\x\y{P_1}P_2}{\Context_{12}, x : \tclient\Type}$ and
    $\wtp{\Server\x\y{Q_1}{Q_2}}{\ContextD, x : \tserver\dual\Type}$ where
    $\Context = \Context_{12},\ContextD$.
    From \refrule\ConsRule we deduce $\wtp{P_1}{\Context_1,y:\Type}$ and
    $\wtp{P_2}{\Context_2,x:\tclient\Type}$ where $\Context_{12} =
    \Context_1,\Context_2$.
    From \refrule\ServerRule we deduce $\wtp{Q_1}{\ContextD, x :
    \tserver\dual\Type, y : \dual\Type}$.
    We derive $\wtp{\Cut\x{P_2}{Q_1}}{\Context_2, \ContextD, y : \dual\Type}$
    with one application of \refrule\CutRule.
    We conclude $\wtp{\Cut\y{P_1}{\Cut\x{P_2}{Q_1}}}\Context$ with another
    application of \refrule\CutRule.
    
  \item \refrule{r-done} Then $P = \Cut\x{\Nil\x}{\Server\x\y{R}{Q}} \red Q$.
    From \refrule\CutRule, \refrule\NilRule and \refrule\ServerRule we deduce
    $\wtp{\EmptyPool\x}{x:\tclient\Type}$ and $\wtp{\Server\x\y{R}{Q}}{\Context,
    x : \tserver\dual\Type}$.
    From \refrule\ServerRule we conclude $\wtp{Q}\Context$.

  \item \refrule{r-par} Then
    $P = \Cut\x{P_1}{P_2} \red \Cut\x{Q_1}{P_2} = Q$ where
    $P_1 \red Q_1$.
    From \refrule\CutRule we deduce $\wtp{P_1}{\Context_1,x:\Type}$ and
    $\wtp{P_2}{\Context_2,x:\dual\Type}$ where $\Context=\Context_1,\Context_2$.
    Using the induction hypothesis we derive $\wtp{Q_1}{\Context_1,x:\Type}$.
    We conclude $\wtp{\Cut\x{Q_1}{P_2}}\Context$ with an application of
    \refrule\CutRule.

  \item \refrule{r-pool} Then $P = \Cons\x\y{P_1}{P_2} \red \Cons\x\y{P_1}{Q_2}
    = Q$ where $P_1 \red Q_2$.
    From \refrule\ConsRule we deduce $\wtp{P_1}{\Context_1, y : \Type}$ and
    $\wtp{P_2}{\Context_2, x : \tclient\Type}$ where $\Context = \Context_1,
    \Context_2, x : \tclient\Type$.
    Using the induction hypothesis we derive $\wtp{Q_2}{\Context_2, x :
    \tclient\Type}$.
    We conclude $\wtp{\Cons\x\y{P_1}{Q_2}}\Context$ with an application of
    \refrule\ConsRule.

  \item \refrule{r-struct} Using the induction hypothesis with two applications
  of \cref{lem:pcong}.
    \qedhere
  \end{itemize}
\end{proof}

In order to prove deadlock freedom it is convenient to introduce \emph{reduction
contexts} to make it easy to refer to unguarded sub-terms of a process. A
reduction context is basically a process with a single hole denoted by $\Hole$. 
\[
  \textbf{Reduction context}
  \qquad
  \RC, \RD ~~::=~~ \Hole ~~\mid~~ \Cut\x\RC{P} ~~\mid~~ \Cut\x{P}\RC
\]

Note that holes cannot occur in the tail of client pools, that is
$\Cons\x\y{P}\RC$ is \emph{not} a reduction context even though the tail of a
client pool may reduce by means of \refrule{r-pool}. The point is that, in order
to prove deadlock freedom, it is never necessary to reduce the tail of a client
pool.
Hereafter we write $\RC[P]$ for the process obtained by replacing the hole in
$\RC$ with $P$. Note that this notion of replacement may capture some channels
occurring free in $P$.

Before addressing deadlock freedom, we prove the following proximity lemma,
showing that it is always possible to move a restriction close to a process in
which the restricted channel occurs free.

\begin{lemma}
    \label{lem:proximity}
    If $x\in\fn{P}\setminus(\fn\RC\cup\bn\RC)$ then $\Cut\x{\RC[P]}Q \pcong
    \RD[\Cut\x{P}{Q}]$ for some $\RD$.
\end{lemma}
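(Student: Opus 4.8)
The plan is to argue by structural induction on the reduction context $\RC$, pushing the restriction on $x$ one layer inward at each step by means of the structural pre-congruence rules. I rely throughout on $\pcong$ being a pre-congruence, hence reflexive, transitive, and closed under the two cut shapes of a reduction context, so that an equivalence established for a sub-context may be lifted underneath an enclosing cut. In the base case $\RC = \Hole$ we have $\RC[P] = P$, and taking $\RD = \Hole$ the claim $\Cut\x{P}{Q} \pcong \Cut\x{P}{Q}$ holds by reflexivity.

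For the inductive step, $\RC$ has the form $\Cut\y{\RC'}{R}$ or $\Cut\y{R}{\RC'}$, where $y$ is the channel bound by the outermost cut and $\RC'$ is a strictly smaller context still containing the hole. Writing $A \eqdef \RC'[P]$, I use \refrule{s-par-comm} --- once on the outer composition, and once on the inner one when the hole sits in the right branch --- to rewrite $\Cut\x{\RC[P]}{Q}$ into the shape $\Cut\x{Q}{\Cut\y{A}{R}}$. I then apply \refrule{s-par-assoc} left-to-right to obtain $\Cut\y{\Cut\x{Q}{A}}{R}$, moving the restriction on $x$ past the binder $y$ so that it now guards $A$ directly. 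A further use of \refrule{s-par-comm} turns the inner composition into $\Cut\x{\RC'[P]}{Q}$, to which the induction hypothesis applies: indeed $\fn{\RC'} \cup \bn{\RC'} \subseteq \fn\RC \cup \bn\RC$, so $x \in \fn{P} \setminus (\fn{\RC'} \cup \bn{\RC'})$, and we obtain $\Cut\x{\RC'[P]}{Q} \pcong \RD'[\Cut\x{P}{Q}]$ for some $\RD'$. Lifting this equivalence underneath $\Cut\y\Hole{R}$ by congruence and setting $\RD \eqdef \Cut\y{\RD'}{R}$ closes the case. The side conditions of the associativity step are where the hypothesis is spent: they read $x \in \fn{A} \setminus \fn{R}$ and $y \notin \fn{Q}$. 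The membership $x \in \fn{A}$ follows from $x \in \fn{P}$ and $x \notin \bn{\RC'}$, while $x \notin \fn{R}$ follows from $x \notin \fn\RC$ together with $\fn{R} \setminus \set{y} \subseteq \fn\RC$ and $x \neq y$ (as $y \in \bn\RC$ but $x \notin \bn\RC$).

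I expect the main obstacle to be the remaining side condition $y \notin \fn{Q}$, which need not hold as stated and must be arranged by $\alpha$-renaming the bound channel $y$ to a fresh name. This step is exactly the content of the lemma: in the source term $Q$ lies outside the scope of $y$, whereas in $\RD[\Cut\x{P}{Q}]$ the subterm $\Cut\x{P}{Q}$ --- and hence $Q$ --- comes to sit inside that scope, so we must guarantee that no free name of $Q$ is captured. The delicacy is that, as noted after the definition of reduction contexts, plugging $P$ into $\RC$ may itself capture free names of $P$, so the renaming of $y$ has to be carried out consistently over both $R$ and $\RC'$ (and thus over $A$); the hypothesis $x \notin \bn\RC$ is precisely what guarantees that the one occurrence we care about, $x$, is never captured and survives the rearrangement intact. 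Finally, note that the pool-related rules \refrule{s-pool-par} and \refrule{s-par-pool} play no role here, since holes never occur in the tails of client pools and a reduction context is built solely from cuts.
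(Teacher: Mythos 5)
Your proposal is correct and follows essentially the same route as the paper: structural induction on $\RC$, with the inductive case handled by the identical chain of \refrule{s-par-comm}, \refrule{s-par-assoc}, \refrule{s-par-comm}, the induction hypothesis, and $\RD \eqdef \Cut\y{\RD'}{R}$. Your treatment of the side condition $y \notin \fn{Q}$ via consistent $\alpha$-renaming of $\RC$'s bound channels is exactly what the paper packages as the blanket assumption, without loss of generality, that $\fn{Q} \cap \bn\RC = \emptyset$.
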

\begin{proof}
    By induction on $\RC$ and by cases on its shape. We do not detail symmetric
    cases and we assume, without loss of generality, that $\fn{Q} \cap \bn\RC =
    \emptyset$.

    \begin{itemize}
        \item $\RC = \Hole$. We conclude by taking $\RD \eqdef \Hole$ and by
        reflexivity of $\pcong$.

        \item $\RC = \Cut\y{\RC'}{R}$. Then $x \in \fn{P} \setminus (\fn{\RC'}
        \cup \bn{\RC'} \cup \fn{R} \cup \set\y)$. We derive
        \[
            \begin{array}[b]{r@{~}c@{~}ll}
                \Cut\x{\RC[P]}{Q} & = & \Cut\x{\Cut\y{\RC'[P]}R}{Q}
                & \text{by definition of $\RC$}
                \\
                & \pcong & \Cut\x{Q}{\Cut\y{\RC'[P]}{R}}
                & \text{by \refrule{s-par-comm}}
                \\
                & \pcong & \Cut\y{\Cut\x{Q}{\RC'[P]}}{R}
                & \text{by \refrule{s-par-assoc} since $x\not\in\fn{R}$, $y\not\in\fn{Q}$}
                \\
                & \pcong & \Cut\y{\Cut\x{\RC'[P]}Q}{R}
                & \text{by \refrule{s-par-comm}}
                \\
                & \pcong & \Cut\y{\RD'[\Cut\x{P}{Q}]}{R}
                & \text{by ind. hyp. for some $\RD'$}
                \\
                & = & \RD[\Cut\x{P}{Q}]
                & \text{by taking $\RD \eqdef \Cut\y{\RD'}R$}
            \end{array}
            \qedhere
          \]
        \end{itemize}
\end{proof}

\newcommand{\tdepth}[1]{\mathsf{cd}(#1)}

The next auxiliary result proves that, in a well-typed process, a finite number
of applications of \refrule{s-call} is always sufficient to unfold all of the
process invocations occurring in it. To this aim, we introduce some more
terminology on processes.
We say that $P$ is a \emph{guard} if it is not a parallel composition or a
process invocation. Note that every guard specifies a topmost action on some
channel $x$. In this case, we say that $P$ is an $x$-guard. We say that $P$ is
\emph{unguarded} in $Q$ if $Q = \RC[P]$ for some $\RC$.
We say that $P$ is \emph{unfolded} if $P = \RC[Q]$ implies that $Q$ is not an
invocation.

\begin{lemma}
    \label{lem:unfolded}
    If $\wtp{P}\Context$ then there exists an unfolded $Q$ such that $P \pcong
    Q$.
\end{lemma}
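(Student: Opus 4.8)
The plan is to obtain $Q$ from $P$ by repeatedly applying \refrule{s-call} to unfold every process invocation occurring in an unguarded position, and to prove that only finitely many such unfoldings are ever needed. The finiteness is where the validity of the typing derivation (\cref{def:valid-typing}) comes into play.

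First I would single out the relevant fragment of the (possibly infinite) typing derivation $\pi$ of $\wtp{P}\Context$. Say that a node of $\pi$ is a \emph{spine} node if it is reached from the root by a path crossing only applications of \refrule\CutRule and \refrule\CallRule. Since every process is a parallel composition, an invocation, or a guard, each spine node is a cut (giving a binary \refrule\CutRule branching), an invocation (giving a unary \refrule\CallRule step), or a guard, which we take to be a leaf of the spine. The spine is therefore a tree of branching degree at most two whose leaves are exactly the guards reachable from $P$ through cuts and unfoldings.

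The hard part will be showing that the spine is finite. Suppose it were infinite. Being finitely branching, by K\"onig's lemma it would contain an infinite branch $b$, and $b$ is then an infinite branch of $\pi$ itself. By construction $b$ consists solely of \refrule\CutRule and \refrule\CallRule applications, so it never passes through \refrule\ServerRule; but \cref{def:valid-typing} demands that every infinite branch of a valid derivation go through infinitely many applications of \refrule\ServerRule, a contradiction. The crux is precisely this identification of an unbounded chain of cut-descents and unguarded unfoldings with an infinite, \refrule\ServerRule-free branch, which validity rules out.

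Finally, from the now finite spine I would read off $Q$ by structural recursion: a leaf keeps its guard; a \refrule\CutRule node $\Cut\x{P_1}{P_2}$ becomes the cut of the processes obtained recursively from its two subtrees; and a \refrule\CallRule node for $\Call\A{\seqof\x}$ is replaced by the process obtained from its body. Each \refrule\CallRule node contributes one \refrule{s-call} rewrite, applied within the surrounding cut context, which is legitimate because $\pcong$ is a pre-congruence and hence closed under the cut contexts that constitute reduction contexts; composing these finitely many rewrites yields $P \pcong Q$. The resulting $Q$ is unfolded, because its unguarded sub-terms are exactly the spine nodes, each of which is a cut or a guard and none of which is an invocation.
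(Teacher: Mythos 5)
Your proposal is correct and follows essentially the same route as the paper: the paper's proof introduces a ``call depth'' measure $\tdepth{P}$ that recurses exactly through your spine (cuts and calls, stopping at guards), justifies its well-definedness by the same appeal to validity (every infinite branch of a valid derivation must cross \refrule\ServerRule, whose conclusion types a guard), and then unfolds the finitely many unguarded invocations by \refrule{s-call}. The only difference is presentational: you invoke K\"onig's lemma explicitly to get finiteness of the spine, whereas the paper leaves that step implicit in the claim that the guard-delimited portion of the derivation is finite, and it phrases the final construction as an induction on $\tdepth{P}$ rather than a structural recursion on the spine.
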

\begin{proof}
  Let the \emph{call depth} of $P$ be the natural number $\tdepth{P}$
  inductively defined as follows:
  \[
    \tdepth{P} =
    \begin{cases}
      1 + \tdepth{Q} & \text{if $P = \Call\A{\seqof\x}$ and $\Let\A{\seqof\x}{Q}$} \\
      1 + \max\set{\tdepth{P_1},\tdepth{P_2}} & \text{if
        $P = \Cut\x{P_1}{P_2}$} \\
      0 & \text{otherwise}
    \end{cases}
  \]

  Roughly, $\tdepth{P}$ is the maximum depth in the typing derivation of $P$
  where an unguarded guard is encountered.
  To see that $\tdepth{P}$ is well defined, recall that in every infinite branch
  of a valid typing derivation there are infinitely many applications of
  \refrule\ServerRule and that a process of the form $\Server\x\y{Q}{R}$ is a
  guard.
  Therefore, the value of $\tdepth{P}$ is only determined by the portion of
  $P$'s derivation tree that stops at each occurrence of a guard. This portion
  is finite.
  The proof proceeds by induction on $\tdepth{P}$ and by cases on the shape of
  $P$. The desired $Q$ is obtained by applying \refrule{s-call} each time an
  unguarded invocation is encountered and the induction guarantees that this
  rewriting is finite.
\end{proof}

\thmdf*
\begin{proof}
    By \cref{lem:unfolded} we may assume, without loss of generality, that $P$
    is unfolded. We want to show that there are two $x$-guards in $P$ that can
    synchronize. To this aim, let $\threads{P}$ be inductively defined as
    \[
        \threads{P} =
        \begin{cases}
            \threads{P_1} + \threads{P_2} & \text{if $P = \Cut\x{P_1}{P_2}$} \\
            1 & \text{otherwise}
        \end{cases}
    \]
    and let $\channels{P}$ be inductively defined as
    \[
        \channels{P} =
        \begin{cases}
            1 + \channels{P_1} + \channels{P_2} & \text{if $P = \Cut\x{P_1}{P_2}$} \\
            0 & \text{otherwise}
        \end{cases}
    \]

    In words, $\threads{P}$ counts the number of unguarded guards in $P$ whereas
    $\channels{P}$ counts the number of unguarded restrictions in $P$.
    It is easy to prove that $\threads{P} > \channels{P}$. So, there must be at
    least one channel name $x$ such that $P$ contains two unguarded $x$-guards.
    That is, $P = \RC[\Cut\x{\RC_1[P_1]}{\RC_2[P_2]}]$ and both $P_1$ and $P_2$
    are $x$-guards and $x\not\in\fn{\RC_i}\cup\bn{\RC_i}$ for $i=1,2$.
    Then we derive
    \[
        \begin{array}{rcll}
            P & = & \RC[\Cut\x{\RC_1[P_1]}{\RC_2[P_2]}]
            & \text{by definition of $P$, $P_1$ and $P_2$}
            \\
            & \pcong & \RC[\RD_1[\Cut\x{P_1}{\RC_2[P_2]}]]
            & \text{by \cref{lem:proximity} for some $\RD_1$}
            \\
            & \pcong & \RC[\RD_1[\Cut\x{\RC_2[P_2]}{P_1}]]
            & \text{by \refrule{s-par-comm}}
            \\
            & \pcong & \RC[\RD_1[\RD_2[\Cut\x{P_2}{P_1}]]]
            & \text{by \cref{lem:proximity} for some $\RD_2$}
        \end{array}
    \]

    Now we reason by cases on the shape of $P_1$ and $P_2$, knowing that they
    are $x$-guards and that they are well typed in contexts that contain the
    associations $x : \Type$ and $x : \dual\Type$ for some $\Type$.
    If $P_2 = \Client\x\y.Q$ then $P_1 = \Server\x\y{Q_1}{Q_2}$ and $P$ may
    reduce using \refrule{r-connect}. The case in which $P_1 = \Client\x\y.Q$ is
    symmetric and can be handled in a similar way with an additional application
    of \refrule{s-par-comm}.
    The cases in which one of $P_1$ and $P_2$ is $\Nil\x$ can be handled
    analogously, deducing that $P$ may reduce using \refrule{r-done}. The only
    cases left are when neither $P_1$ nor $P_2$ is a client or $\Nil\x$.
    Then, $P_1$ and $P_2$ must be $x$-guards beginning with dual actions which
    can synchronize using one of the rules \refrule{r-close}, \refrule{r-comm}
    or \refrule{r-case}, possibly with the help of an application of
    \refrule{s-par-comm}.
\end{proof}


\bibliography{main}

\end{document}